\newcommand\anuncio[1]{!#1}
\newcommand\LanBox{{{\mathcal L}_\nc}}
\newcommand\LanFor{{{\mathcal L}_{\nc\ModForAbt}}}
\newcommand\AMA{{\mathbb A}}
\newcommand\AMD{{\mathbb D}}
\newcommand\LanForAct{{{\mathcal L}_{\nc\ModForAbt\AMD}}}
\newcommand\LogFor{{{\rm Log}_{\nc\ModForAbt}}}
\newcommand\LogForD{{d\text{-}{\rm Log}_{\nc\ModForAbt}}}
\newcommand\serial{\sf Ser}
\newcommand\model{\mathcal M}
\newcommand\modelsu{\models}
\newcommand\modelsd{\models_d}
\def\lb{\left\llbracket}
\def\rb{\right\rrbracket}
\newtheorem{definition}{Definition}
\newtheorem{proposition}{Proposition}
\newtheorem{example}{Example}
\newtheorem{theorem}{Theorem}
\newtheorem{lemma}{Lemma}
\newtheorem{corollary}{Corollary}
\newcommand\peq\preccurlyeq
\newcommand\nc{\Box}
\newcommand\ps{\Diamond}
\newcommand\ncforabt[2]{[{\ModForAbt #1 }]#2}
\newcommand\psforabt[2]{\langle {\ModForAbt #1 }\rangle #2}
\newcommand\ncfor[2]{[{\ModFor #1}]#2}
\newcommand\psfor[2]{\langle {\ModFor #1}\rangle #2}
\newcommand{\set}[1]{\left\{ #1 \right\}}
\newcommand{\tupla}[1]{\langle #1 \rangle}
\newcommand{\ts}[1]{\lb #1\rb}
\newcommand{\at}{\ensuremath{\mathit{At}}}
\newcommand{\D}{\ensuremath{\mathcal{C}}}
\newcommand{\su}{\ensuremath{u}}
\newcommand{\ModForAbt}{\boldsymbol{\ddagger}}
\newcommand{\ModFor}{\boldsymbol{\dagger}}
\newcommand{\limp}{\rightarrow}
\newcommand{\ldimp}{\leftrightarrow}
\newcommand{\U}{\mathsf{U}}
\newcommand{\E}{\mathsf{E}}
\newcommand{\R}{\mathsf{R}}
\newcommand{\e}{\mathsf{e}}
\newcommand{\f}{\mathsf{f}}
\newcommand{\pre}{\operatorname{Pre}}
\newcommand{\post}{\operatorname{Post}}
\newcommand\ncame[1]{[\U,\e]#1}
\newcommand\ncamf[1]{[\U,\f]#1}
\newcommand\ncam[3]{[#1,#2]#3}
\newcommand\ncfaln[3]{[#1, #2] #3}
\newcommand\ncfalt[3]{[\underline{#1}, #2] #3}
\newcommand\ncfalb[3]{[#1, \underline{#2}] #3}
\newcommand\ncfaldd[1]{[\_\_] #1}
\newcommand{\Udd}{\U_{\set{D_{1}, D_{2}}}}
\tikzstyle{mundo}=[circle,draw,inner sep=0pt,minimum size=18pt]
\tikzstyle{texto}=[inner sep=0pt,minimum size=10pt]
\begin{document}

\title{Forgetting complex propositions}

\author[1]{David Fern\'andez--Duque}
\author[2]{\'Angel Nepomuceno--Fern\'andez}
\author[2]{Enrique Sarri\'on--Morrillo\textsuperscript{*}}
\author[2]{Fernando Soler--Toscano}
\author[2]{Fernando R. Vel\'azquez--Quesada}

\affil[1]{\small Department of Mathematics, Instituto Tecnol\'ogico Aut\'onomo de M\'exico, {\tt david.fernandez@itam.mx}}
\affil[2]{\small Grupo de L\'ogica, Lenguaje e Informaci\'on, Universidad de Sevilla, {\tt \{nepomuce,esarrion,fsoler,FRVelazquezQuesada\}@us.es}}
\affil[*]{\small \emph{Corresponding author.}}

\renewcommand\Authands{ and }

\maketitle

\begin{abstract}
This paper uses possible-world semantics to model the changes that may occur in an agent's knowledge as she loses information. This builds on previous work in which the agent may forget the truth-value of an atomic proposition, to a more general case where she may forget the truth-value of a propositional formula. The generalization poses some challenges, since in order to forget whether a complex proposition $\pi$ is the case, the agent must also lose information about the propositional atoms that appear in it, and there is no unambiguous way to go about this.

We resolve this situation by considering expressions of the form $\ncforabt\pi\varphi$, which quantify over all possible (but `minimal') ways of forgetting whether $\pi$. Propositional atoms are modified non-deterministically, although uniformly, in all possible worlds. We then represent this within action model logic in order to give a sound and complete axiomatization for a logic with knowledge and forgetting. Finally, some variants are discussed, such as when an agent {\em forgets} $\pi$ (rather than {\em forgets whether} $\pi$) and when the modification of atomic facts is done non-uniformly throughout the model.\\

\noindent \textbf{Keywords:} forgetting, dynamic epistemic logic, action models, theory contraction, knowledge representation.
\end{abstract}

\section{Introduction}

Epistemic notions such as knowledge and belief are subject to the effect of different epistemic actions, many of which have been studied in the literature. Just as beliefs can be affected by expansion \cite{Rott1989ctc}, contraction \cite{AlchouronGardenforMakinson1985,Fuhrmann1991}, revision \cite {Rott1989ctc,Boutilier1996irmccb,LeitgebSegerberg2007,vanBenthem2007dlbr,Rott2007,BaltagSmets2008tlg}, merging \cite {KoniecznyPinoPerez2011} and diverse forms of inference \cite{Velazquez2014delieb,NepomucenoEtAl2013abddel} among others, knowledge can be affected by deductive inference \cite{Velazquez2009InfUp,Velazquez2013eximNeigh}, public \cite{Plaza1989,GerbrandyGroeneveld1997} and other forms of announcements \cite{BaltagMossSolecki1999}. 

One action that has not received much attention is that of
\emph{forgetting} and its effect on an agent's \emph{knowledge}. One
of the reasons for this is its similarities with belief contraction,
an action that, when represented semantically, typically relies on
Lewis' system of spheres for conditionals \cite{Lewis1973}. This
system of spheres uses an ordering among theories (the theories'
`plausibility ordering') and thus provides a guideline for defining
the new beliefs an agent will have when one of the current ones is
discarded \cite{Grove1988}. This is adequate for belief contraction,
as a plausibility ordering is natural when defining beliefs: the
collection of epistemically possible situations can be understood as
having an order which not only defines this epistemic notion (as what
is true in the most plausible situations) but also establishes a
ranking among what is not believed but still has not been
discarded. However, such an ordering is not natural when dealing with
knowledge: there does not seem to be an ordering among the
epistemically possible situations that are known to \emph{not} be the
case and hence have been discarded.  

On the other hand, in the \emph{knowledge representation} area there are approaches for forgetting a finite set of atoms. In such proposals knowledge is represented as a finite set of formulas (the \emph{knowledge set}), and the typical definition of forgetting uses some notion of similarity between models: a knowledge set is the result of forgetting the atoms in $\at'$ if and only if every model of the resulting knowledge set is equivalent to a model of the original knowledge set when the atoms in $\at'$ are disregarded \cite{Lin94a}. In modal contexts as in this paper, the used equivalence notion is that of \emph{bisimulation}, which gives raise to systems \cite{ZhangZhou2008,DBLP:journals/ai/ZhangZ09} similar to those that contain modalities for bisimulation quantification \cite{french:2006}.

{\smallskip}

This work presents a logical treatment under possible worlds semantics
of an action that represents the forgetting of propositional formulas,
without relying on an ordering among theories or epistemic
possibilities and without using any notion of similarity notion between models. It can be seen as an extension of
\cite{vanDitmarschEtAl2009}, which deals only with forgetting the truth-value of atomic propositions. Several ways of forgetting a given formula
  are possible. We focus on two of them and give some hints about
  variants with different properties. The key intuition guiding our
  definitions is that an agent forgetting $\pi$ will lose her
  (possible) previous knowledge of $\pi$. But if she previously knew
  $\lnot\pi$, there are two possibilities after forgetting $\pi$: her
  knowledge of $\lnot\pi$ may fail or not. We call the first option {\em forgetting whether $\pi$} and the second {\em forgetting $\pi$.} While we will focus more on the first, we will also discuss the second possibility.

\paragraph{Layout of the paper} Section \ref{SecBasic} recalls some basic notions from propositional and epistemic logic which will be used throughout the text. Section \ref{SecUnif} presents the notion of {\em uniform forgetting whether} which, being the main focus of the article, is discussed in some detail in Section \ref{SecEffect}. Section \ref{SecFor} introduces a simpler action of `forgetting' (where a propositional formula is considered to be possibly false, but not necessarily possibly true) and compares it to the action of forgetting whether. Section \ref{SecAx} presents our main result, which is a sound and complete axiomatization for our logic of knowledge and forgetting. Finally, Section \ref{SecAlt} discusses some alternate ways of modelling the action of forgetting.

\section{Basic definitions}\label{SecBasic}

Our formalism for reasoning about forgetting will be based on {\em epistemic,} or more generally {\em modal,} logic. Throughout this text, {\at} denotes a designated countable non-empty set of {\em atoms} or {\em propositional variables.} Let us begin by reviewing the basic language of propositional modal logic.

\begin{definition}
  The grammar of $\LanBox$ is given by
  \begin{center}
    $\varphi::=\top\mid
    p\mid\neg\varphi\mid\left(\varphi\wedge\psi\right)\mid\nc \varphi$
  \end{center}
  where $p \in \at$. Formulas of the form $\nc\varphi$ are read as ``the agent knows that $\varphi$ is the case''. The symbols $\bot$, $\lor$, $\limp$, $\ldimp$ and $\ps$ are defined as usual.
\end{definition}

Modal logics are typically interpreted via their {\em Kripke} or {\em possible worlds} semantics, as described below:

\begin{definition}\label{def:model}
  A {\em Kripke frame} is a tuple $\mathcal F=\langle
  W,R\rangle$ where $W$ is a non-empty set and $R\subseteq
  W\times W$ a binary relation; no assumptions are made a priori about $R$. A {\em model} $\model = \tupla{\mathcal{F}, V}$ is a frame $\mathcal{F}$ equipped with a
  valuation $V:\at\rightarrow\mathcal{P}\left(W\right)$. A {\em pointed model} is a pair
  $(\model,w)$ with $\model$ a model and $w$ an
  element of its domain.
\end{definition}

\begin{definition}\label{DefSatBasic}
  Let $\model = \langle W,R,V\rangle$ be a model. The {\em satisfaction relation} $\models$ between pointed models and formulas is defined as follows:
  \begin{flushleft}
    \begin{tabular}{l@{\quad\text{iff}\quad}l}
      $\model,w\models p$                  & $w\in V(p)$; \\
      $\model,w\models \lnot \varphi$      & $\model,w\not\models \varphi$; \\
      $\model,w\models \varphi \land \psi$ & $\model,w\models \varphi$ {\;and\;} $\model,w\models \psi$; \\
      $\model,w\models \nc \varphi$        & for all $v\in W$, $wRv$ implies $\model,v\models \varphi$.
    \end{tabular}
  \end{flushleft}
Given a model $\model$, define the function $\ts{\cdot}^{\model}: \LanBox \rightarrow\mathcal{P}\left(W\right)$ as $w \in \ts{\varphi}^{\model}$ if and only if $\model,w\models \varphi$. The notation $\ts{\varphi}^{\model}$ will be abbreviated as $\ts{\varphi}$ when this does not lead to confusion.

As usual, $\model\models\varphi$ states that $\ts{\varphi}^\model=W$, and if $\sf X$ is a class of models, ${\sf X}\models \varphi$ states that  $\model\models\varphi$ for all $\model \in \sf X$. The formula $\varphi$ is {\em valid} when $\model\models\varphi$ for {\em every} model $\model$, a case denoted by $\models\varphi$.
\end{definition}

When modelling knowledge, the class of models in which the relation is an equivalence relation, $\sf S5$, is of particular interest. However, this paper will keep a more general discussion, only restricting its attention to models with particular properties when explicitly stated.

{\medskip}

In order to formalize our notion of forgetting, it will be convenient to represent propositional formulas in conjunctive normal form using sets of clauses. Recall that a \emph{literal} $\ell$ is an atom or its negation, and a {\em clause} $D$ is a finite (possibly empty) set of literals interpreted disjunctively, so that $D$ represents the formula $\bigvee D$\footnote{As usual, $\bigvee \varnothing := \bot$ and $\bigwedge \varnothing :=\top$.}. A clause $D$ is said to be a {\em consequence} of a propositional formula $\pi$ when $\models \pi \to\bigvee D$. 

A propositional formula is in {\em conjunctive normal form} when it is given as a finite (possibly empty) set of clauses, interpreted conjunctively. More precisely, the set of clauses $\D$ is interpreted as the formula $\widehat{\D}$ defined as
\[\widehat{\D} := \bigwedge_{D\in \D}\bigvee D.\]
Clearly, a given propositional formula may have many equivalent conjunctive normal forms, but
we wish to pick one canonically. In order to do so, first discard all \emph{tautological} clauses, i.e. those clauses $D$ in which there is an atom $p$ such that $\set{p,\neg p} \subseteq D$. A clause $D \neq \varnothing$ which is non-tautological is called
\emph{contingent}. Then, within each clause, `unnecessary' literals are removed: a clause $D$ is said to be a
{\em minimal consequence} of a propositional formula $\pi$ if and only if $\models \pi \limp \bigvee D$ and there is no $D'\subsetneq D$ such that
$\models \pi \limp \bigvee D'$. With this in mind, here is a formal definition:

\begin{definition}
  Let $\pi$ be a formula of propositional logic. Define the {\em clausal form} $\D(\pi)$ to be the set of all clauses that are minimal non-tautological consequences of $\pi$. Figure \ref{fig:exclausalforms} shows some examples.
\end{definition}

\begin{figure}
\[\begin{array}{cc@{\quad}|@{\quad}cc}
\toprule
\pi&\D(\pi)&\pi&\D(\pi)\\
\midrule
  p\land q&\set{\set{p},\set{q}}&
 \lnot(p\land q)&\set{\set{\lnot p,\lnot q}}\\
  p\lor q & \set{\set{p,q}}&
  \lnot(p\lor q) & \set{\set{\lnot p},\set{\lnot q}}\\
  p\limp q &\set{\set{\lnot p,q}} & \lnot (p\limp q)
  &\set{\set{p},\set{\lnot q}}\\
  p\ldimp q &\set{\set{\lnot p,q},\set{p,\lnot q}} & \lnot (p\ldimp q)
  &\set{\set{p,q},\set{\lnot p, \lnot q}}\\
  \bottomrule
  \end{array}\]
\caption{Some clausal forms that will be used in the text.}\label{fig:exclausalforms}
\end{figure}

The normal form $\D(\pi)$ is actually the set of prime implicates of $\pi$ (cf. \cite{Quine1952,DBLP:journals/jar/RameshBM97}), and there are several algorithms for calculating it (e.g., \cite{Quine1952,DBLP:conf/aaai/Kleer92,DBLP:journals/jsc/KeanT90,DBLP:journals/amai/Rymon94,DBLP:journals/jar/RameshBM97}; see \cite{DBLP:journals/logcom/Bittencourt08} for more). This concept has been already used for epistemic concerns, mainly on proposals following the {\it AGM} approach for belief revision \cite{AlchouronGardenforMakinson1985} in which the agent's beliefs are represented syntactically (e.g., \cite{DBLP:conf/ausai/Pagnucco06,ZhuangEtAl2007}). Here it will be used to simplify the model operation defined of the next section for representing the action of forgetting $\pi$.

{\medskip}

The next lemma is then straightforward.

\begin{lemma}\label{lem:clauses}
  For any propositional formula $\pi$, the set $\D(\pi)$ is finite, its elements are finite, and it satisfies $\models \pi \ldimp \widehat{\D}(\pi)$. Moreover, $\pi_1 \equiv \pi_2$ implies $\D (\pi_1) = \D (\pi_2)$, and $\D(\top) = \varnothing$ while $\D(\bot) = \set{\varnothing}$.
\end{lemma}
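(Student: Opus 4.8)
The plan is to verify each of the several assertions packaged into Lemma~\ref{lem:clauses} separately, working directly from the definition of $\D(\pi)$ as the set of minimal non-tautological consequences of $\pi$. Observe first that every clause in $\D(\pi)$ is built from literals over the finitely many atoms actually occurring in $\pi$ (any consequence of $\pi$ is determined, modulo logical equivalence, by the valuations over $\mathrm{At}(\pi)$, so a minimal consequence cannot mention an atom outside $\mathrm{At}(\pi)$); hence each such clause is a subset of the finite set of literals over $\mathrm{At}(\pi)$, which immediately gives both that elements of $\D(\pi)$ are finite and that there are only finitely many of them. This disposes of the first sentence except for the equivalence $\models\pi\ldimp\widehat\D(\pi)$.

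For that equivalence I would argue both directions. The direction $\models\pi\limp\widehat\D(\pi)$ is immediate: by definition every $D\in\D(\pi)$ satisfies $\models\pi\limp\bigvee D$, so $\models\pi\limp\bigwedge_{D\in\D(\pi)}\bigvee D=\widehat\D(\pi)$. For the converse $\models\widehat\D(\pi)\limp\pi$, suppose for contradiction that some valuation (equivalently, some world in some model, evaluating the propositional part) makes $\widehat\D(\pi)$ true but $\pi$ false. Writing $D^*$ for the clause consisting of the negations of exactly the literals true at that valuation over $\mathrm{At}(\pi)$ --- i.e.\ the clause that is false precisely at that valuation --- one checks that $D^*$ is a non-tautological consequence of $\pi$ (it is a consequence because $\pi$ is false there; it is non-tautological and contingent since it is refuted somewhere). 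Then $D^*$ contains some minimal non-tautological consequence $D\subseteq D^*$, and $D\in\D(\pi)$; but $D$ is false at the chosen valuation (being a subset of $D^*$, which is false there), contradicting that $\widehat\D(\pi)$ is true there. Hence no such valuation exists and the equivalence holds.

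The remaining three claims are short. If $\pi_1\equiv\pi_2$ then $\models\pi_1\limp\bigvee D$ iff $\models\pi_2\limp\bigvee D$ for every clause $D$, so the minimality condition picks out the same clauses and $\D(\pi_1)=\D(\pi_2)$. For $\D(\top)$: since $\top$ has no non-tautological consequences --- any clause $D$ with $\models\top\limp\bigvee D$ must be valid, hence tautological --- the set of minimal non-tautological consequences is empty, so $\D(\top)=\varnothing$. For $\D(\bot)$: every clause is a consequence of $\bot$, so in particular $\varnothing$ is, it is non-tautological, and it is $\subseteq$-minimal; and any nonempty clause strictly contains $\varnothing$, so $\varnothing$ is the unique minimal non-tautological consequence, giving $\D(\bot)=\set{\varnothing}$.

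The only step with any real content is the $\models\widehat\D(\pi)\limp\pi$ direction, and the possible subtlety there is making sure the ``refuting clause'' $D^*$ is correctly shown to be a non-tautological consequence and that the descent to a minimal subclause $D\in\D(\pi)$ is legitimate --- this uses that the set of non-tautological consequences contained in $D^*$ is finite and nonempty, so it has a $\subseteq$-minimal element, which is then by definition in $\D(\pi)$. Everything else is bookkeeping with the definitions, which is presumably why the paper calls the lemma ``straightforward.''
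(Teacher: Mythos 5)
Your proof is correct. Note that the paper itself gives no argument for this lemma --- it is stated as ``straightforward'' immediately after the definition of $\D(\pi)$ --- so there is no official proof to compare against; your write-up simply supplies the missing details, and does so soundly. The only step with real content, as you say, is $\models\widehat{\D}(\pi)\limp\pi$, and your construction of the clause $D^*$ refuted exactly at a putative countervaluation, followed by descent to a $\subseteq$-minimal non-tautological consequence $D\subseteq D^*$ (legitimate because subsets of the non-tautological $D^*$ are themselves non-tautological and the family is finite and nonempty), is exactly right. One small point you state rather than prove is that a minimal non-tautological consequence mentions only atoms of $\pi$; this is true and easy --- if $D$ contained a literal $\ell$ over an atom $p\notin\mathrm{At}(\pi)$, toggling $p$ in any model of $\pi$ preserves $\pi$ and, by non-tautologousness of $D$, shows $D\setminus\set{\ell}$ is already a consequence, contradicting minimality --- and spelling it out would make the finiteness claim fully self-contained.
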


\section{Uniform forgetting}\label{SecUnif}

In order to reason about forgetting whether, the basic modal language will be extended with a new modality. 

\begin{definition}
  The language $\LanFor$ extends $\LanBox$ with expressions of the form $\ncforabt\pi\varphi$ with $\pi$ a propositional formula, read as ``after the agent forgets whether $\pi$, $\varphi$ is the case''. The expression $\psforabt{\pi}{\varphi}$ is defined in the standard way as  $\neg\ncforabt\pi{\neg\varphi}$.
\end{definition}

It is worthwhile to emphasise that, as discussed before, this paper understands ``forgetting whether $\pi$'' simply as ``forgetting $\pi$'s truth-value''; as such, the act of forgetting studied here does not involve other related actions (as, e.g., {\em becoming unaware of} atoms/formulas \cite{vanBenthemV10,vanDitmarschEtAl2013}).

{\medskip}

Observe how, since $\nc\varphi$ is the case when $\varphi$ holds in all of the agent's epistemic alternatives, in order for her to forget (i.e., to not know anymore) that a given propositional formula $\pi$ is the case, she needs to consider as possible at least one situation where $\pi$ fails. The first step is, then, to decide how to make a given $\pi$ fail in a given world $w$. Suppose that $\pi$ is not a tautology (such case will be discussed later). When $\pi$ is rewritten in its clausal form $\D(\pi) = \set{D_1,\ldots,D_n}$, it is clear than in order to make $\pi$ false at $w$, at least one clause in $\D(\pi)$ should be false in such a world; this would, in principle, give us a total of $2^n-1$ different forms of falsifying $\pi$. However, falsifying an arbitrary non-empty set of such clauses would be problematic, both because of the combinatorial explosion and because the negations of different clauses might be mutually inconsistent. A better alternative is to follow a \emph{minimal} change approach, where $\pi$ will be falsified by making only one of its clauses $D_i$ false.

Now, given the clause that will be falsified, we need to decide not only how many worlds need to be introduced as part of the agent's epistemic possibilities, but also which truth-value will be assigned, in such new worlds, to the atoms that do not appear in the given clause. Again, the minimal change approach suggests that the least intrusive way to change the agent's knowledge is to make a copy of the current epistemic possibilities and then falsify the given clause in each one of them.\footnote{Of course, such operation is not minimal with respect to the number of worlds that will be added; it is minimal with respect to the changes in the agent's knowledge.} In the resulting model, the original formula $\pi$ has been \emph{uniformly} falsified because the same clause $D_i \in \D(\pi)$ has been falsified across all the worlds in the new copy of the set of epistemic possibilities. 

{\smallskip}

The formalisation of this idea will be used to provide the semantic interpretation of formulas expressing the effect of the slightly different `forgetting whether $\pi$', $\ncforabt\pi\varphi$. In order for an agent to forget the truth-value of a given $\pi$, she needs to consider not only a possibility that falsifies $\pi$ (by falsifying one of the clauses of $\pi$'s clausal form) but also a possibility that falsifies $\lnot \pi$ (by falsifying one of the clauses of $\lnot\pi$'s clausal form). Thus, a model operation representing this action takes two clauses and creates two copies of the current set of epistemic possibilities, with each copy falsifying one clause. The operation defined below is a generalisation that receives an epistemic model and a finite set of clauses $\D$, returning a model with a copy of the current set of epistemic possibilities falsifying each clause in $\D$. 

\begin{definition}\label{def:multiclause}
  Let $\model = \langle W, R, V \rangle$ be a model
  and $\D=\{D_i : i\in I\}$ a finite (possibly empty) set of 
  non-tautological clauses, where without loss of generality $0\not\in I$. The new model $\model^{\D}_\su = \tupla{
    W^{\D}_\su, R^{\D}_\su, 
    V^{\D}_\su}$ is defined as follows:
    \begin{enumerate}
  \item $W^{\D}_\su := W \times (\set{0} \cup I);$
   \item for all $w,v\in W$ and $i,j\in \{0\}\cup I$, $(w,i)R^{\D}_\su (v,j)$ if and only if $wRv$;
    \item for all $w\in W$, $(w,0) \in V^{\D}_\su(p)$ if and only if $w \in V(p)$; and
    \item for all $w\in W$ and $i\in I$, $(w,i) \in V^{\D}_\su(p) $ if and only if one of the following holds:
   \begin{enumerate}
   
   \item $\neg p \in D_i$; or
   \item both $\set{p, \lnot p} \cap D_i = \varnothing$ and $w \in V(p)$.
   
   \end{enumerate}
\end{enumerate}
\end{definition}

Thus, $W^{\D}_\su$ has two types of worlds. Worlds of the form $(w, 0)$ preserve the original valuation: an atom $p$ is true on $(w,0)$ if and only if $p$ was already true on $w$. On the other hand, each world of the form $(w, i)$ with $i\in I$ falsifies all of the literals in $D_i$, leaving the remaining atoms as before. The relation in the new model simply follows the original relation, making a world $(v,j)$ accessible from a world $(w,i)$ when $v$ is accessible from $w$ in the original model. 

Note also that we are modelling forgetting within the context of epistemic logic, where knowledge is represented semantically. This leads to several key differences from syntactic approaches of knowledge representation. Most notably, an agent cannot distinguish between semantically equivalent formulas (so, if she knows $\pi$, she also knows all its semantic equivalents). By using both $\pi$'s and $\lnot \pi$'s minimal clausal forms, the \emph{forgetting whether} action treats semantically equivalent formulas in the same way (so, afterwards, the agent has forgotten not only $\pi$'s truth value, but also that of all $\pi$'s semantic equivalents). Approaches that distinguish semantically equivalent formulas are possible, but would require a different framework for modelling the agent's knowledge.

{\medskip}

With this model operation it is possible to define the semantic interpretation of formulas of the form $\ncforabt\pi\varphi$ which, it is recalled, are intuitively read as ``after the agent forgets the truth-value of $\pi$, $\varphi$
is the case''. 

\begin{definition} \label{def:forabt} 
  Let $\model = \langle W, R, V \rangle$ be a model and $w$ a world of
  $W$. We extend 
  Definition \ref{DefSatBasic} to $\LanFor$ by defining $\model, w
  \modelsu \ncforabt\pi\varphi$ if and only if, for all $D_1\in \D(\pi)$ and $D_2\in \D(\neg \pi),$ \[\model^{\set{D_1,D_2}}_\su, (w,0) \modelsu\varphi.
  \] 
  The set of formulas in $\LanFor$ valid under $\modelsu$ will be denoted $\LogFor$.
\end{definition}

Thus, $\ncforabt\pi\varphi$ states that $\varphi$ is the case after the
agent forgets the truth value of $\pi$, independently of the
choice of the clauses $D_1 \in \D(\pi)$ and $D_2 \in \D(\lnot\pi)$ that are
falsified in the added worlds.

\section{The effect of \emph{forgetting whether}}\label{SecEffect}

The model $\model^{\D}_\su$ is the result of the agent considering new possibilities in which each clause in $\D$ fails. This is achieved by keeping a copy of the original model (the $(w, 0)$-worlds, which preserve the original valuation) and adding, for each clause $D_{i}$, a copy of the original model (the $(w, i)$-worlds) in which $D_{i}$ is falsified by falsifying each of its literals in {\em all} of the worlds in the copy, keeping the remaining atoms as before. Thus, each clause $D_{i}$ is false at each world $(w, i)$, as the following lemma shows.

\begin{lemma}\label{lem:clauseFails}
  Let $\model = \langle W, R, V \rangle$ be a model
  and $\D =\set {D_i : i\in I}$ a non-empty finite set of
  non-tautological clauses (again $0 \notin I$). Then, for any $w \in W$ and any $i\in I$,  
  \[ \model^\D_\su, (w,i) \not\models \bigvee D_i .\]
\end{lemma}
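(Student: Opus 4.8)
The plan is to prove the contrapositive-free statement directly: show that at the world $(w,i)$ \emph{every} literal occurring in $D_i$ is false. Since $\bigvee D_i$ is by definition the disjunction of the literals of $D_i$, this immediately yields $\model^\D_\su,(w,i)\not\models\bigvee D_i$. The degenerate case $D_i=\varnothing$ needs no work, as then $\bigvee D_i=\bigvee\varnothing=\bot$, which fails everywhere; so I may assume $D_i$ is a nonempty contingent clause and fix an arbitrary literal $\ell\in D_i$.

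I would then split into two cases according to the polarity of $\ell$, in each case simply unwinding item~4 of Definition~\ref{def:multiclause}. If $\ell$ is a positive atom $p$, then $p\in D_i$; since $\D$ consists of non-tautological clauses we have $\set{p,\lnot p}\not\subseteq D_i$, hence $\lnot p\notin D_i$, so the alternative ``$\lnot p\in D_i$'' is not available; and $p\in\set{p,\lnot p}\cap D_i$, so the alternative ``$\set{p,\lnot p}\cap D_i=\varnothing$ and $w\in V(p)$'' also fails. Therefore $(w,i)\notin V^\D_\su(p)$, i.e.\ $\model^\D_\su,(w,i)\not\models p$, so the disjunct $\ell=p$ is false at $(w,i)$. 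If instead $\ell$ is a negative literal $\lnot p$, then $\lnot p\in D_i$, so the first alternative in item~4 applies and $(w,i)\in V^\D_\su(p)$; hence $\model^\D_\su,(w,i)\models p$, and again the disjunct $\ell=\lnot p$ is false at $(w,i)$.

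Since $\ell$ was arbitrary, no disjunct of $\bigvee D_i$ holds at $(w,i)$, which is exactly the claim. This argument is essentially immediate from the definition of $V^\D_\su$; the only point that genuinely needs the hypotheses is the non-tautology of $D_i$ in the positive-literal case. If one allowed $p,\lnot p\in D_i$, then via the clause $\lnot p\in D_i$ the valuation would put $(w,i)\in V^\D_\su(p)$, so the disjunct $p$ would survive and the lemma would fail; ruling out this single degenerate configuration is the ``hard part'', and it is handled precisely by the assumption that $\D$ consists of non-tautological clauses.
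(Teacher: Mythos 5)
Your proof is correct and follows essentially the same route as the paper's: treat $D_i=\varnothing$ trivially, then show every literal of a contingent $D_i$ fails at $(w,i)$ by unwinding the definition of $V^\D_\su$, using non-tautology to rule out $\lnot p\in D_i$ in the positive-literal case. If anything, you spell out slightly more carefully than the paper why clause 4(b) also fails when $p\in D_i$, but the argument is the same.
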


  \begin{proof}
    If $D_i=\varnothing$ the result is trivial, as $\bigvee D_i =
    \bot$. Otherwise take $D_i = \set{l_1, \ldots, l_m}$ (i.e., $D_i$ is contingent). Then, 
    for any $l_k \in D_i$,  
    \begin{itemize}
      \item if $l_k$ is an atom $p$, then since $D_i$ is contingent, $\lnot p \not\in D_i$; thus, by definition, $(w,i) \not\in V^{\D}_\su(p)$ and hence $(w,i) \not\in \ts{l_k}^{\model^\D_\su}$.
      \item if $l_k$ is an atom's negation $\lnot p$, then $\lnot p \in D_i$; so, by definition, $(w,i) \in V^{\D}_\su(p)$ and thus $(w,i) \not\in \ts{l_k}^{\model^\D_\su}$.
    \end{itemize}
    Hence, every literal in $D_i$ fails at $(w,i)$ and therefore so the disjunction $\bigvee D_i$.
  \end{proof}

\begin{example}\label{exa:seriality}
  Consider the following pointed model $(\model, w_{0})$ (with each world $w$ containing $V(w)$ and the evaluation double-circled) in which the agent knows $p$ (i.e., $\model, w_{0} \modelsu \nc p$):
  \begin{center}
    \begin{tikzpicture}[->,>=stealth', very thick]
      \node [mundo, double] (w0) at (-1, 0) {\scriptsize $p$};
      \node [mundo] (w1) at ( 1, 0) {\scriptsize $p$};

      \node [texto, node distance=7pt, right=of w0.south] {\scriptsize $w_{0}$};
      \node [texto, node distance=7pt, right=of w1.south] {\scriptsize $w_{1}$};

      \path (w0) edge [loop left] (w1)
                 edge (w1);
    \end{tikzpicture}
  \end{center}
  Consider the action of forgetting whether $p$. Given that $\D(p)=
    \set{ \set{p} }$, there is only one clause to be chosen:
  $\set{p}$. Similarly, $\D(\neg p)=
    \set{ \set{\neg p} }$, so the only clauses the agent will consider when forgetting whether $p$ are $D_1=\{p\}$ and $D_2=\{\neg p\}$. The pointed model $(\model^{\set{\set{p}, \set{\lnot p}}}_\su, (w_{0}, 0))$ appears below, with
  the top row being the copy that results from making
  $\set{p}$ false and the bottom
  row being the copy that results from making $\set{\lnot p}$ false
  (thus making $p$ true in those worlds):
  \begin{center}
    \begin{tikzpicture}[->,>=stealth', very thick]
      \node [mundo] (w10) at (-1, 1.5) {};
      \node [mundo] (w11) at ( 1, 1.5) {};
      \node [mundo, double] (w00) at (-1, 0) {\scriptsize $p$};
      \node [mundo] (w01) at ( 1, 0) {\scriptsize $p$};
      \node [mundo] (w20) at (-1,-1.5) {\scriptsize $p$};
      \node [mundo] (w21) at ( 1,-1.5) {\scriptsize $p$};

      \node [texto, node distance=6pt, right=of w10.north] {\scriptsize $(w_{0},1)$};
      \node [texto, node distance=6pt, right=of w11.north] {\scriptsize $(w_{1},1)$};
      \node [texto, node distance=6pt, left=of w00.south] {\scriptsize $(w_{0},0)$};
      \node [texto, node distance=6pt, right=of w01.south] {\scriptsize $(w_{1},0)$};
      \node [texto, node distance=6pt, right=of w20.south] {\scriptsize $(w_{0},2)$};
      \node [texto, node distance=6pt, right=of w21.south] {\scriptsize $(w_{1},2)$};

      \path (w10) edge [loop above] (w11)
                  edge (w11)
            (w00) edge [loop left] (w01)
                  edge (w01)
            (w20) edge [loop below] (w21)
                  edge (w21)
            (w10) edge [<->] (w00)
                  edge [<->, bend right = 80] (w20)
                  edge (w01)
                  edge (w21)
            (w00) edge (w11)
                  edge (w21)
            (w20) edge [<->] (w00)
                  edge (w01)
                  edge (w11);

    \end{tikzpicture}
  \end{center}
  As a result of the action, the agent considers possible worlds where
  $p$ holds as well as worlds where $p$ fails. Thus, $\model^{\set{\set{p}, \set{\lnot p}}}_\su, (w_{0}, 0) \modelsu \lnot \nc
  p \land \lnot \nc{\lnot p}$, and hence $\model, w_{0} \modelsu
  \nc p \land \ncforabt{p}{(\lnot \nc p \land \lnot \nc{\lnot p})}$.
\end{example}

In the previous example, note how, if $w_{1}$ were the evaluation point at the initial model (and hence $(w_{1},0)$ the evaluation point at the model after the operation), then the agent would know $p$ before the action (by vacuity), but she would still know $p$ afterwards (by vacuity too). The following proposition shows that this counterintuitive outcome of the forgetting whether action can only occur when the knowledge of the agent is inconsistent to begin with.

\begin{proposition}\label{PropConsK}
  Let $\pi$ be a propositional formula that is neither a tautology nor a contradiction (so $\D(\pi)$ and $\D(\lnot\pi)$ are both non-empty sets of contingent clauses). Then, 
  \[ \modelsu \ncforabt{\pi}{(\nc{\lnot\pi} \lor \nc{\pi})} \ldimp \nc{\bot} .\]
\end{proposition}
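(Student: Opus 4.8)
The plan is to prove the biconditional in two directions, at the level of pointed models. Fix a model $\model = \langle W,R,V\rangle$ and a world $w \in W$, and write $\D(\pi) = \{D_1,\dots,D_n\}$, $\D(\lnot\pi) = \{E_1,\dots,E_m\}$, all of them contingent (non-empty, non-tautological) clauses by hypothesis. The right-to-left direction is the easy one: if $\model,w \modelsu \nc\bot$, then $w$ has no $R$-successors, so by clause (2) of Definition \ref{def:multiclause} the world $(w,0)$ has no $R^{\{D_i,E_j\}}_\su$-successors in any of the models $\model^{\{D_i,E_j\}}_\su$; hence $\model^{\{D_i,E_j\}}_\su,(w,0) \modelsu \nc{\lnot\pi}$ vacuously, and therefore $\model,w \modelsu \ncforabt{\pi}{(\nc{\lnot\pi}\lor\nc\pi)}$. (Note the converse implication $\ncforabt{\pi}{(\nc{\lnot\pi}\lor\nc\pi)} \to \nc\bot$ is the substantive half; the displayed $\ldimp$ then follows since $\nc\bot \to \ncforabt{\pi}{\psi}$ for every $\psi$ by the same vacuity argument.)

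For the left-to-right direction, I would argue contrapositively: assume $\model,w \not\modelsu \nc\bot$, i.e. $w$ has some $R$-successor, and exhibit a choice of $D_1 \in \D(\pi)$ and $D_2 \in \D(\lnot\pi)$ such that $\model^{\{D_1,D_2\}}_\su,(w,0) \not\modelsu \nc{\lnot\pi} \lor \nc\pi$, which shows $\model,w \not\modelsu \ncforabt{\pi}{(\nc{\lnot\pi}\lor\nc\pi)}$. Pick any clause $D_1 \in \D(\pi)$ and any $D_2 \in \D(\lnot\pi)$; index them so that $I = \{1,2\}$ with $D_1$ falsified in the ``$1$-copy'' and $D_2$ falsified in the ``$2$-copy''. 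Let $v$ be an $R$-successor of $w$. The two key subclaims are: (i) $\model^{\{D_1,D_2\}}_\su,(v,1) \not\modelsu \pi$, which will break $\nc\pi$ at $(w,0)$ since $(w,0) R^{\{D_1,D_2\}}_\su (v,1)$; and (ii) $\model^{\{D_1,D_2\}}_\su,(v,2) \not\modelsu \lnot\pi$, which will break $\nc{\lnot\pi}$ at $(w,0)$ since $(w,0) R^{\{D_1,D_2\}}_\su (v,2)$.

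Subclaim (i) follows from Lemma \ref{lem:clauseFails}: at $(v,1)$ the clause $D_1$ is false, i.e. $\model^{\{D_1,D_2\}}_\su,(v,1) \not\modelsu \bigvee D_1$; since $D_1 \in \D(\pi)$ is a consequence of $\pi$ we have $\models \pi \to \bigvee D_1$, hence $\pi$ fails at $(v,1)$. Subclaim (ii) is symmetric: at $(v,2)$ the clause $D_2 \in \D(\lnot\pi)$ is false, and $\models \lnot\pi \to \bigvee D_2$, so $\lnot\pi$ fails at $(v,2)$. This is exactly where the hypothesis that $\pi$ is neither a tautology nor a contradiction is used: it guarantees that both $\D(\pi)$ and $\D(\lnot\pi)$ are non-empty (so the clauses $D_1,D_2$ exist) and that they consist of genuinely contingent clauses (so Lemma \ref{lem:clauseFails} applies and the worlds $(v,1),(v,2)$ are actually added). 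Combining (i) and (ii), $(w,0)$ has a successor where $\pi$ fails and a successor where $\lnot\pi$ fails, so neither $\nc\pi$ nor $\nc{\lnot\pi}$ holds at $(w,0)$, completing the contrapositive. The main obstacle is really just bookkeeping — making sure the indexing set $I = \{1,2\}$ and the clause assignment in Definition \ref{def:multiclause} line up so that Lemma \ref{lem:clauseFails} can be invoked on each copy — together with correctly handling the vacuous-truth behaviour under $\nc\bot$ in both directions; no deeper difficulty is expected.
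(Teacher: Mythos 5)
Your proposal is correct and follows essentially the same argument as the paper: vacuity for the right-to-left direction, and for the left-to-right direction a contrapositive using a successor $v$, the accessibility of $(v,1)$ and $(v,2)$ from $(w,0)$, and Lemma \ref{lem:clauseFails} to falsify $\pi$ and $\lnot\pi$ in the respective copies. The only cosmetic difference is that you pass from the failure of $\bigvee D_i$ to the failure of $\pi$ (resp. $\lnot\pi$) via the consequence relation $\models\pi\to\bigvee D_1$, whereas the paper routes this through $\widehat{\D}(\pi)$ and Lemma \ref{lem:clauses}; these are the same step in different clothing.
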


  \begin{proof}
    Let $(\model,w)$ be a pointed model with $\model = \langle W,R,V \rangle$.

    From right to left, suppose $\model,w \modelsu \nc{\bot}$. Then there is no $v$ such that $wRv$ and hence by $R^\D_\su$'s definition, and regardless of $\D$, there is no $(v, i)$ such that $(w, 0)R^\D_\su(v, i)$. Hence ${\cal M}^{\D}_\su, (w, 0) \modelsu \nc{\lnot \pi} \lor \nc\pi$ and therefore $\model,w \modelsu \ncforabt{\pi}{(\nc{\lnot \pi} \lor \nc \pi)}$.

    The left-to-right direction is proved by contrapositive. Suppose that $\model,w \modelsu \lnot\nc{\bot}$; then there is $v$ such that
    $wRv$. By Definition \ref{def:multiclause}, for any $D_1 \in \D(\pi)$ and $D_2 \in
    \D(\lnot\pi)$, $(w,0)R^{\set{D_1, D_2}}_{\su}(v,1)$ and $(w,0)R^{\set{D_1,
        D_2}}_{\su}(v,2)$. By the
    contingency of $D_1$ and $D_2$ and Lemma \ref{lem:clauseFails}, $\model^{\set{D_1,
        D_2}}_{\su}, (v,i) \not\modelsu \bigvee D_i$ for $i \in \set{1, 2}$ and hence
    both $\model^{\set{D_1, D_2}}_{\su}, (v,1) \not\modelsu \widehat{\D}(\pi)$ and
    $\model^{\set{D_1, D_2}}_{\su}, (v,2) \not\modelsu \widehat{\D}({\neg \pi})$. Then
    $\model^{\set{D_1, D_2}}_{\su}, (w,0) \modelsu \ps{\lnot \pi} \land
    \ps{\pi}$ and therefore, since neither $\D(\pi)$ nor
    $\D(\lnot\pi)$ is empty, $\model,w \modelsu
    \psforabt{\pi}{(\ps \pi \land \ps{\lnot \pi})}$, i.e., $\model,w \not\modelsu \ncforabt{\pi}{(\nc \pi \vee \nc{\lnot \pi})}$. 
  \end{proof} 

As a special case, if $\pi$ is an atom $p$, both $\D(p) = \set{ \set{p} }$ and $\D(\lnot p) = \set{ \set{\lnot p} }$ are non-empty and both contain only contingent clauses, so from the above proposition it follows that $ \ncforabt{p}{(\nc{p} \lor \nc{\lnot p})} \ldimp \nc{\bot} $ is valid. More interestingly, recall that an agent's knowledge is consistent at $w$ if and only if $w$ has at least one accessible world. In the class of models where this consistency property holds, called \emph{serial} and denoted by $\serial$, we obtain a stronger version of Proposition \ref{PropConsK}. 

\begin{corollary}
  For any non-tautological and non-contradictory propositional formula $\pi$,
  \[ \serial \modelsu \psforabt{\pi}{\top} \land \ncforabt{\pi}{(\lnot \nc \pi \land \lnot \nc{\lnot \pi})}. \]

\end{corollary}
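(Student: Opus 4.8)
The plan is to read off both conjuncts directly from the semantic clause for $\ncforabt{}{}$, reusing Lemma~\ref{lem:clauseFails} and essentially the argument already present in the proof of Proposition~\ref{PropConsK}. I would fix an arbitrary serial model $\model=\langle W,R,V\rangle$ and world $w\in W$ and establish separately that $\model,w\modelsu\psforabt{\pi}{\top}$ and $\model,w\modelsu\ncforabt{\pi}{(\lnot\nc\pi\land\lnot\nc{\lnot\pi})}$. The first observation to record is that, since $\pi$ is neither a tautology nor a contradiction, Lemma~\ref{lem:clauses} (together with $\D(\top)=\varnothing$ and $\D(\bot)=\set{\varnothing}$) guarantees that both $\D(\pi)$ and $\D(\lnot\pi)$ are non-empty sets of contingent clauses.

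For $\psforabt{\pi}{\top}$: using non-emptiness I pick some $D_1\in\D(\pi)$ and $D_2\in\D(\lnot\pi)$. Then $\model^{\set{D_1,D_2}}_\su,(w,0)\not\modelsu\bot$ trivially, so the universal condition defining $\ncforabt{\pi}{\bot}$ fails; that is, $\model,w\not\modelsu\ncforabt{\pi}{\bot}$, which after unwinding the abbreviation $\psforabt{\pi}{\top}=\lnot\ncforabt{\pi}{\lnot\top}$ is exactly $\model,w\modelsu\psforabt{\pi}{\top}$. (This conjunct does not use seriality.)

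For the second conjunct I would take arbitrary $D_1\in\D(\pi)$ and $D_2\in\D(\lnot\pi)$, both necessarily contingent, and show $\model^{\set{D_1,D_2}}_\su,(w,0)\modelsu\lnot\nc\pi\land\lnot\nc{\lnot\pi}$; since the pair is arbitrary, this yields $\model,w\modelsu\ncforabt{\pi}{(\lnot\nc\pi\land\lnot\nc{\lnot\pi})}$. By seriality choose $v$ with $wRv$; by Definition~\ref{def:multiclause} both $(w,0)\,R^{\set{D_1,D_2}}_\su\,(v,1)$ and $(w,0)\,R^{\set{D_1,D_2}}_\su\,(v,2)$. Lemma~\ref{lem:clauseFails} gives $\model^{\set{D_1,D_2}}_\su,(v,1)\not\modelsu\bigvee D_1$; as $\bigvee D_1$ is one of the conjuncts of $\widehat{\D}(\pi)$, this yields $\model^{\set{D_1,D_2}}_\su,(v,1)\not\modelsu\widehat{\D}(\pi)$, and hence $\model^{\set{D_1,D_2}}_\su,(v,1)\not\modelsu\pi$ because $\models\pi\ldimp\widehat{\D}(\pi)$ (Lemma~\ref{lem:clauses}) and the two formulas, being propositional, take the same truth value at every world. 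Thus $(w,0)$ has an $R^{\set{D_1,D_2}}_\su$-successor refuting $\pi$, so $\model^{\set{D_1,D_2}}_\su,(w,0)\modelsu\lnot\nc\pi$; the symmetric argument using $D_2\in\D(\lnot\pi)$ and the successor $(v,2)$ gives $\model^{\set{D_1,D_2}}_\su,(w,0)\modelsu\lnot\nc{\lnot\pi}$, as required.

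I do not expect a genuine obstacle here, since this is really a corollary. The one point to stay alert to is that the second conjunct is not literally Proposition~\ref{PropConsK} specialised to serial models: negating $\ncforabt{\pi}{(\nc{\lnot\pi}\lor\nc\pi)}$ produces a $\psforabt{}{}$-statement (an existential quantification over admissible clause pairs), whereas the corollary asserts a $\ncforabt{}{}$-statement (a universal one). What makes it work is that the left-to-right part of the proof of Proposition~\ref{PropConsK} already does the uniform work --- it establishes $\ps{\lnot\pi}\land\ps\pi$ in $\model^{\set{D_1,D_2}}_\su$ at $(w,0)$ for \emph{every} admissible $D_1,D_2$, under the hypothesis $\lnot\nc\bot$, which holds at every world of a serial model --- and $\ps{\lnot\pi}\land\ps\pi$ is equivalent to $\lnot\nc\pi\land\lnot\nc{\lnot\pi}$. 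So the proof above is essentially that argument reorganised, and the only routine check is the propositional fact that a world refuting a conjunct $\bigvee D_1$ of $\widehat{\D}(\pi)$ refutes $\widehat{\D}(\pi)$, and therefore $\pi$.
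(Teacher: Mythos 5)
Your proof is correct and follows essentially the same route as the paper, which leaves this corollary without an explicit proof precisely because the left-to-right (contrapositive) argument for Proposition~\ref{PropConsK} already establishes $\ps\pi\land\ps\lnot\pi$ at $(w,0)$ for \emph{every} pair $D_1\in\D(\pi)$, $D_2\in\D(\lnot\pi)$ under $\lnot\nc\bot$, which seriality guarantees everywhere, while non-emptiness of $\D(\pi)$ and $\D(\lnot\pi)$ gives $\psforabt{\pi}{\top}$. Your remark that the corollary is the universal ($\ncforabt{}{}$) statement rather than the mere negation of the Proposition, and that the uniform work is already done in that proof, is exactly the right observation.
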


Note that Proposition \ref{PropConsK} is restricted to formulas $\pi$ that are neither tautologies nor contradictions because otherwise the proof does not go through: in such cases either $\D(\pi)=\varnothing$ or else $\D(\lnot \pi)=\varnothing$, and hence there are no clauses for falsifying one of $\pi$ or $\lnot \pi$. As a consequence of this behaviour, both $\ncforabt{\top}\varphi$ and $ \ncforabt{\bot}\varphi$ are valid for any formula $\varphi$, and thus neither $\psforabt{\top}\top$ nor $\psforabt{\bot}\top$ is satisfiable.

{\medskip}

\begin{example}\label{exa:nobisimulation}
  Consider the following pointed model in which the agent knows $p \limp q$.
  \begin{center}
    \begin{tikzpicture}[->,>=stealth', very thick]
      \node [mundo, double] (w0) at (-1, 0) {\scriptsize $pq$};
      \node [mundo] (w1) at ( 1, 0) {\scriptsize $q$};

      \node [texto, node distance=7pt, right=of w0.south] {\scriptsize $w_{0}$};
      \node [texto, node distance=7pt, right=of w1.south] {\scriptsize $w_{1}$};

      \path (w0) edge [loop left] (w1)
                 edge (w1)
            (w1) edge [loop right] (w1);
    \end{tikzpicture}
  \end{center}
  Consider the act of forgetting whether $p \limp q$. Since $\D(p \limp q) = \set{\set{\lnot p, q}}$ and $\D(\lnot(p \limp q)) = \set{\set{p} \set{\lnot q}}$, there are two possible outcomes:
  \begin{center}
    \begin{tabular}{c@{\qquad\qquad}c}
      By using $\set{\lnot p, q}$ and $\set{p}$:
      &
      By using $\set{\lnot p, q}$ and $\set{\lnot q}$:
      \\
      \begin{tikzpicture}[->,>=stealth', very thick]
        \node [mundo] (w10) at (-1, 1.5) {\scriptsize $p$};
        \node [mundo] (w11) at ( 1, 1.5) {\scriptsize $p$};
        \node [mundo, double] (w00) at (-1, 0) {\scriptsize $pq$};
        \node [mundo] (w01) at ( 1, 0) {\scriptsize $q$};
        \node [mundo] (w20) at (-1,-1.5) {\scriptsize $q$};
        \node [mundo] (w21) at ( 1,-1.5) {\scriptsize $q$};

        \node [texto, node distance=6pt, right=of w10.north] {\scriptsize $(w_{0},1)$};
        \node [texto, node distance=6pt, right=of w11.north] {\scriptsize $(w_{1},1)$};
        \node [texto, node distance=6pt, left=of w00.south] {\scriptsize $(w_{0},0)$};
        \node [texto, node distance=6pt, right=of w01.south] {\scriptsize $(w_{1},0)$};
        \node [texto, node distance=6pt, right=of w20.south] {\scriptsize $(w_{0},2)$};
        \node [texto, node distance=6pt, right=of w21.south] {\scriptsize $(w_{1},2)$};

        \path (w10) edge [loop above] ()
                    edge (w11)
                    edge [<->] (w00)
                    edge [<->, bend right = 80] (w20)
                    edge (w01)
                    edge (w21)
              (w00) edge [loop left] ()
                    edge (w01)
                    edge (w11)
                    edge (w21)
              (w20) edge [loop below] ()
                    edge (w21)
                    edge [<->] (w00)
                    edge (w01)
                    edge (w11)
              (w11) edge [loop above] ()
                    edge [<->] (w01)
                    edge [<->, bend left = 80] (w21)
              (w01) edge [loop right] ()
              (w01) edge [<->] (w21)
              (w21) edge [loop below] ();
      \end{tikzpicture}
      &
      \begin{tikzpicture}[->,>=stealth', very thick]
        \node [mundo] (w10) at (-1, 1.5) {\scriptsize $p$};
        \node [mundo] (w11) at ( 1, 1.5) {\scriptsize $p$};
        \node [mundo, double] (w00) at (-1, 0) {\scriptsize $pq$};
        \node [mundo] (w01) at ( 1, 0) {\scriptsize $q$};
        \node [mundo] (w20) at (-1,-1.5) {\scriptsize $pq$};
        \node [mundo] (w21) at ( 1,-1.5) {\scriptsize $q$};

        \node [texto, node distance=6pt, right=of w10.north] {\scriptsize $(w_{0},1)$};
        \node [texto, node distance=6pt, right=of w11.north] {\scriptsize $(w_{1},1)$};
        \node [texto, node distance=6pt, left=of w00.south] {\scriptsize $(w_{0},0)$};
        \node [texto, node distance=6pt, right=of w01.south] {\scriptsize $(w_{1},0)$};
        \node [texto, node distance=6pt, right=of w20.south] {\scriptsize $(w_{0},2)$};
        \node [texto, node distance=6pt, right=of w21.south] {\scriptsize $(w_{1},2)$};

        \path (w10) edge [loop above] () 
                    edge (w11)
                    edge [<->] (w00)
                    edge [<->, bend right = 80] (w20)
                    edge (w01)
                    edge (w21)
              (w00) edge [loop left] ()
                    edge (w01)
                    edge (w11)
                    edge (w21)
              (w20) edge [loop below] ()
                    edge (w21)
                    edge [<->] (w00)
                    edge (w01)
                    edge (w11)
              (w11) edge [loop above] ()
                    edge [<->] (w01)
                    edge [<->, bend left = 80] (w21)
              (w01) edge [loop right] ()
              (w01) edge [<->] (w21)
              (w21) edge [loop below] ();
      \end{tikzpicture}
    \end{tabular}
  \end{center}
  Observe how $\lnot \nc{(p \limp q)}$ holds in the two pointed models, as in both 
  the agent considers possible a world where $p$ holds but $q$
  fails. Similarly, $\lnot \nc{\lnot (p \limp q)}$ holds in the two cases, as in both
  models the agent considers possible worlds where $p \limp q$
  holds. 

Still, the two pointed models do not represent the same state of knowledge or, to be precise, they are not bisimilar. In the model on the left, the agent considers possible a world where $\lnot p \land q$ holds and $p \land q$ is possible (the path $(w_0, 0) \to (w_0, 2) \to (w_0, 0) $), a possibility that does not exists in the model on the right (any transition from $(w_0, 0)$ to a $\lnot p \land q$-world forces a move to the right-hand side of the diagram, from which there are no arrows back to the left-side). Thus,
  \[
    \model^{\set{\set{\lnot p, q}, \set{p}}}_\su, (w_{0}, 0) \modelsu \ps{(\lnot p \land q \land \ps{(p \land q)})}
  \]
    but
  \[
    \model^{\set{\set{\lnot p, q}, \set{\lnot q}}}_\su, (w_{0}, 0) \not\modelsu \ps{(\lnot p \land q \land \ps{(p \land q)})}
  \]
  and hence
  \[
    \model, w_0 \not\modelsu \ncforabt{(p \limp q)}{\ps{(\lnot p \land q \land \ps{(p \land q)})}}.
  \]
  
\end{example}

\begin{example}\label{exa:forgetnotknown}
  Consider again the initial pointed model of Example
  \ref{exa:nobisimulation}. Observe how the agent knows neither $p
  \land q$ (she considers $w_1$ possible) nor $\lnot (p \land q)$ (she
  considers $w_0$ possible). Since $\D(p \land q) = \set{\set{p},
    \set{q}}$ and $\D(\lnot(p \land q)) = \set{\set{\lnot p, \lnot
      q}}$, there are two possible outcomes for an action of
  forgetting the truth-value of the already `unknown' $p \land q$: 
  \begin{center}
    \begin{tabular}{c@{\qquad\qquad}c}
      By using $\set{p}$ and $\set{\lnot p, \lnot q}$:
      &
      By using $\set{q}$ and $\set{\lnot p, \lnot q}$:
      \\
      \begin{tikzpicture}[->,>=stealth', very thick]
        \node [mundo] (w10) at (-1, 1.5) {\scriptsize $q$};
        \node [mundo] (w11) at ( 1, 1.5) {\scriptsize $q$};
        \node [mundo, double] (w00) at (-1, 0) {\scriptsize $pq$};
        \node [mundo] (w01) at ( 1, 0) {\scriptsize $q$};
        \node [mundo] (w20) at (-1,-1.5) {\scriptsize $pq$};
        \node [mundo] (w21) at ( 1,-1.5) {\scriptsize $pq$};

        \node [texto, node distance=6pt, right=of w10.north] {\scriptsize $(w_{0},1)$};
        \node [texto, node distance=6pt, right=of w11.north] {\scriptsize $(w_{1},1)$};
        \node [texto, node distance=6pt, left=of w00.south] {\scriptsize $(w_{0},0)$};
        \node [texto, node distance=6pt, right=of w01.south] {\scriptsize $(w_{1},0)$};
        \node [texto, node distance=6pt, right=of w20.south] {\scriptsize $(w_{0},2)$};
        \node [texto, node distance=6pt, right=of w21.south] {\scriptsize $(w_{1},2)$};

        \path (w10) edge [loop above] ()
                    edge (w11)
                    edge [<->] (w00)
                    edge [<->, bend right = 80] (w20)
                    edge (w01)
                    edge (w21)
              (w00) edge [loop left] ()
                    edge (w01)
                    edge (w11)
                    edge (w21)
              (w20) edge [loop below] ()
                    edge (w21)
                    edge [<->] (w00)
                    edge (w01)
                    edge (w11)
              (w11) edge [loop above] ()
                    edge [<->] (w01)
                    edge [<->, bend left = 80] (w21)
              (w01) edge [loop right] ()
              (w01) edge [<->] (w21)
              (w21) edge [loop below] ();
      \end{tikzpicture}
      &
      \begin{tikzpicture}[->,>=stealth', very thick]
        \node [mundo] (w10) at (-1, 1.5) {\scriptsize $p$};
        \node [mundo] (w11) at ( 1, 1.5) {};
        \node [mundo, double] (w00) at (-1, 0) {\scriptsize $pq$};
        \node [mundo] (w01) at ( 1, 0) {\scriptsize $q$};
        \node [mundo] (w20) at (-1,-1.5) {\scriptsize $pq$};
        \node [mundo] (w21) at ( 1,-1.5) {\scriptsize $pq$};

        \node [texto, node distance=6pt, right=of w10.north] {\scriptsize $(w_{0},1)$};
        \node [texto, node distance=6pt, right=of w11.north] {\scriptsize $(w_{1},1)$};
        \node [texto, node distance=6pt, left=of w00.south] {\scriptsize $(w_{0},0)$};
        \node [texto, node distance=6pt, right=of w01.south] {\scriptsize $(w_{1},0)$};
        \node [texto, node distance=6pt, right=of w20.south] {\scriptsize $(w_{0},2)$};
        \node [texto, node distance=6pt, right=of w21.south] {\scriptsize $(w_{1},2)$};

        \path (w10) edge [loop above] () 
                    edge (w11)
                    edge [<->] (w00)
                    edge [<->, bend right = 80] (w20)
                    edge (w01)
                    edge (w21)
              (w00) edge [loop left] ()
                    edge (w01)
                    edge (w11)
                    edge (w21)
              (w20) edge [loop below] ()
                    edge (w21)
                    edge [<->] (w00)
                    edge (w01)
                    edge (w11)
              (w11) edge [loop above] ()
                    edge [<->] (w01)
                    edge [<->, bend left = 80] (w21)
              (w01) edge [loop right] ()
              (w01) edge [<->] (w21)
              (w21) edge [loop below] ();
      \end{tikzpicture}
    \end{tabular}
  \end{center}

  In both resulting pointed models the agent still knows neither $p \land q$
  nor $\lnot (p \land q)$. However, in both cases the action has
  an effect on the agent's information: in the leftmost pointed model she
  considers possible a $\lnot p \land q$-world, $(w_0,1)$, from
  which there is an accessible $p \land q$-world, 
  $(w_0,0)$, something she did not consider possible before:
  \begin{center}
    \begin{small}
      \begin{tabular}{@{}c@{\;\;\;}c@{\;\;\;}c@{}}
        $\model^{\set{\set{p}, \set{\lnot p, \lnot q}}}_\su, (w_{0}, 0) \modelsu \ps{(\lnot p \land q \land \ps{(p \land q)})}$
        &
        but
        &  
        $\model, w_{0} \not\modelsu \ps{(\lnot p \land q \land \ps{(p \land q)})}.$
      \end{tabular}
    \end{small}
  \end{center}
  Moreover, in the rightmost pointed model she considers possible a $p \land
  \lnot q$-world, $(w_0,1)$, something she did not consider
  possible before:  
  \begin{center}
    \begin{small}
      \begin{tabular}{c@{\qquad}c@{\qquad}c}
        $\model^{\set{\set{q}, \set{\lnot p, \lnot q}}}_\su, (w_{0}, 0) \modelsu \ps{(p \land \lnot q)}$
        &
        yet 
        &
        $\model, w_{0} \not\modelsu \ps{(p \land \lnot q)}.$ 
      \end{tabular}
    \end{small}
  \end{center} 
  Thus, forgetting the truth-value of formulas whose truth-value is not known to begin with can affect the agent's information by giving her `new reasons' to not know the formula's truth-value.  
\end{example}

\section{A simpler `forgetting' action}\label{SecFor}

In the current setting it is straightforward to define a simpler action that, instead of forgetting $\pi$'s truth value, simply {\em forgets that} $\pi$ is the case. For this, it is enough to use the model operation of Definition \ref{def:multiclause} omitting the clause for $\D(\lnot\pi)$. Here are the formal definitions:

\begin{definition}
  The language $\mathcal{L}_{\Box\ModFor}$ extends $\mathcal{L}_{\Box}$ with operators of the form $\ncfor{\pi}{}$ for $\pi$ a propositional formula, thus allowing the construction of formulas of the form $\ncfor\pi\varphi$, read as ``after the agent {\em forgets} $\pi$, $\varphi$ is the case''. 
\end{definition}

\begin{definition}\label{def:for}
Let $\model = \langle W, R, V \rangle$ be a model, $w\in W$ and $\pi$ be a propositional formula. We extend Definition \ref{DefSatBasic} by setting
  \[
    \model, w \modelsu \ncfor\pi\varphi
    \quad\text{iff}\quad
    \forall D\in \D(\pi), \model^{\set{D}}_\su, (w,0) \modelsu\varphi.
  \]
\end{definition}

This shows how the forgetting whether $\pi$ action of before consists of simultaneously forgetting both $\pi$ and $\neg\pi$. The question naturally arises of whether the action of forgetting $\pi$'s truth-value could instead be defined as forgetting $\pi$ and {\em then} forgetting $\lnot
\pi$. Below it is shown that this is not the case.

\begin{proposition} The expressions $\ncforabt{\pi}{\varphi}$ and $\ncfor{\pi}{\ncfor{\lnot \pi}{\varphi}}$ are not equivalent, even over the class of $\sf S5$ models.
  \begin{proof}
  Consider the following pointed model $(\model,w_0)$ with both $p$
    and $q$ false at $w_0$:
    \begin{center}
    \begin{tikzpicture}[->,>=stealth', very thick]
      \node [mundo, double] (w0) at (0, 0) {\mbox{}};
      \node [texto, node distance=10pt, right=of w0.south]
      {\scriptsize $w_{0}$}; 
      \path (w0) edge [loop left] (w0);
    \end{tikzpicture}
  \end{center}
  Now, let $\pi$ be $\lnot p \lor \lnot q$. So, $\D(\pi) = \set{\set{\lnot
  p,\lnot q}}$ and $\D(\lnot \pi) = \set{\set{p},\set{q}}$. Then, by
  using first $\set{\lnot p,\lnot q}$ in $\D(\pi)$ and then $\set{q}$
  in $\D(\lnot\pi)$, we build 
  $(\model^{\set{\set{\lnot p,\lnot q}}}_\su)^{\set{\set{q}}}_\su$ 
  in the following way: 
  \begin{center}
    \begin{tabular}{c@{\qquad\quad}c}
    \begin{tikzpicture}[->,>=stealth', very thick]
      \node [mundo, double] (w00) at (0, 0) {\mbox{}};
      \node [mundo] (w01) at (2, 0) {\scriptsize $pq$};
  
      \node [texto] at (1,-2.2) {\mbox{}};

      \node [texto, node distance=10pt, left=of w00.north]
      {\scriptsize $(w_{0},0)$}; 
      
      \node [texto, node distance=10pt, right=of w01.north]
      {\scriptsize $(w_{0},1)$}; 
      \path (w00) edge [loop left] (w00)
                  edge [<->] (w01)
            (w01) edge [loop right] (w01);
    \end{tikzpicture}
    &
    \begin{tikzpicture}[->,>=stealth', very thick]
      \node [mundo, double] (w000) at (0, 0) {\mbox{}};
      \node [mundo] (w010) at (2, 0) {\scriptsize $pq$};
      
      \node [mundo] (w001) at (0, -2) {\mbox{}};
      \node [mundo] (w011) at (2, -2) {\scriptsize $p$};      
  
      \node [texto, node distance=10pt, left=of w000.north]
      {\scriptsize $((w_{0},0),0)$}; 
      \node [texto, node distance=10pt, right=of w010.north]
      {\scriptsize $((w_{0},1),0)$}; 
      \node [texto, node distance=10pt, left=of w001.south]
      {\scriptsize $((w_{0},0),1)$}; 
      \node [texto, node distance=10pt, right=of w011.south]
      {\scriptsize $((w_{0},1),1)$}; 
  
      \path (w000) edge [loop left] (w000)
                   edge [<->] (w001)
                   edge [<->] (w010)
                   edge [<->] (w011)
            (w001) edge [loop left] (w001)
                   edge [<->] (w010)
                   edge [<->] (w011)
            (w010) edge [loop right] (w010)
                   edge [<->] (w011)
            (w011) edge [loop right] (w011);
    \end{tikzpicture}
    \\
    $\model^{\set{\set{\lnot p,\lnot q}}}_\su$
    &
    $\left(\model^{\set{\set{\lnot p,\lnot q}}}_\su\right)^{\set{\set{q}}}_\su$
  \end{tabular}
  \end{center}
  Observe that in the resulting model the agent can access the state
  $((w_0,1),1)$ where $p\land \lnot q$ is true, so
  \[\model,w_0 \modelsu \psfor{\pi}{\psfor{\lnot
      \pi}{\ps{(p\land\lnot q)}}}
  \]
  But with forgetting whether $\pi$, starting at $\model$ it is not possible to produce a state where $p\land \lnot q$ is true. With
  independence of the chosen clause in $\D(\lnot \pi)$, we arrive at
  the following model: 
  \begin{center}
    \begin{tikzpicture}[->,>=stealth', very thick]
      \node [mundo, double] (w00) at (0, 0) {\mbox{}};
      \node [mundo] (w01) at (-2, 0) {\scriptsize $pq$};
      \node [mundo] (w02) at (2, 0) {\mbox{}};

      \node [texto, node distance=3pt, below=of w00.south]
      {\scriptsize $(w_{0},0)$}; 
      \node [texto, node distance=3pt, below=of w01.south]
      {\scriptsize $(w_{0},1)$}; 
      \node [texto, node distance=3pt, below=of w02.south]
      {\scriptsize $(w_{0},2)$}; 
      
      \path (w00) edge [loop above] (w00)
                  edge [<->] (w01)
                  edge [<->] (w02)
            (w01) edge [loop above] (w01)
                  edge [<->, bend left = 40] (w02)
            (w02) edge [loop above] (w02);
    \end{tikzpicture}
  \end{center}
  Then, $\model,w_0 \not\modelsu \psforabt{\pi}{\ps{(p\land\lnot q)}}.$
  \end{proof}
\end{proposition}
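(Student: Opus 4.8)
The plan is to exhibit a single pointed model on which the two modal operators disagree, and the natural candidate is a reflexive singleton world where all atoms are false. The key observation to exploit is that $\ncforabt\pi{}$ falsifies one clause of $\D(\pi)$ and one clause of $\D(\lnot\pi)$ \emph{simultaneously}, in two parallel copies of the original model, whereas $\ncfor\pi{\ncfor{\lnot\pi}{}}$ performs the two falsifications \emph{sequentially}: first one clause of $\D(\pi)$ is falsified, and then — crucially — the model on which $\ncfor{\lnot\pi}{}$ acts already contains worlds where $\pi$ has been falsified, and these worlds too get copied and have a clause of $\D(\lnot\pi)$ falsified in them. This iterated copying creates valuations that cannot arise in a single application of the simultaneous operation.

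First I would pick $\pi := \lnot p \lor \lnot q$, so that $\D(\pi) = \set{\set{\lnot p, \lnot q}}$ and $\D(\lnot\pi) = \set{\set{p},\set{q}}$, and take $(\model, w_0)$ to be the one-point reflexive model with $w_0 \notin V(p) \cup V(q)$; this is an $\sf S5$ model. Next I would compute $\model^{\set{\set{\lnot p,\lnot q}}}_\su$: it has two worlds, $(w_0,0)$ with both atoms false and $(w_0,1)$ with both atoms true (since $\lnot p,\lnot q\in D_1$ forces $p,q$ true there), with the relation the universal relation on two points. Then I would apply the operation for $\ncfor{\lnot\pi}{}$ using the clause $\set{q}\in\D(\lnot\pi)$, obtaining $\bigl(\model^{\set{\set{\lnot p,\lnot q}}}_\su\bigr)^{\set{\set{q}}}_\su$ with four worlds; in the world $((w_0,1),1)$ we falsify $q$ but leave $p$ as it was in $(w_0,1)$, namely true, so $((w_0,1),1)$ satisfies $p\land\lnot q$, and it is accessible from $((w_0,0),0)$. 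Hence $\model,w_0 \modelsu \psfor\pi{\psfor{\lnot\pi}{\ps(p\land\lnot q)}}$.

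For the other side I would show that no application of the forgetting-whether operation to $(\model,w_0)$ produces a world satisfying $p\land\lnot q$. Since $\D(\pi)$ has only one clause, the only choice is in $\D(\lnot\pi)$, giving two models $\model^{\set{\set{\lnot p,\lnot q},\set{p}}}_\su$ and $\model^{\set{\set{\lnot p,\lnot q},\set{q}}}_\su$. In either case the worlds are of the form $(w_0,0)$ (all atoms false), $(w_0,1)$ (from $\set{\lnot p,\lnot q}$: both atoms true), and $(w_0,2)$ (from $\set{p}$: $p$ false, $q$ false; or from $\set{q}$: $q$ false, $p$ false — in both sub-cases neither atom is true, since $q$ was already false in $w_0$). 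So in no world of either model is $p$ true while $q$ is false; thus $\model,w_0\not\modelsu\psforabt\pi{\ps(p\land\lnot q)}$, and combining with the previous paragraph the two expressions are inequivalent, even over $\sf S5$.

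The main obstacle, though it is a mild one, is the bookkeeping in the second application of the operation: one must be careful that when $\ncfor{\lnot\pi}{}$ copies the \emph{already-modified} model, it is the new valuation of $(w_0,1)$ (with $p,q$ true) — not the original valuation of $w_0$ — that is inherited into $((w_0,1),1)$ before falsifying $q$. This is exactly the point where the sequential operation escapes the reach of the simultaneous one, and it is what makes the singleton model suffice. The diagrams make the computation transparent, so beyond drawing them the argument is routine verification against Definitions~\ref{def:multiclause}, \ref{def:forabt} and \ref{def:for}.
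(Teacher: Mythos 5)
Your proposal is correct and follows essentially the same route as the paper's own proof: the same one-point reflexive model with $p,q$ false, the same choice $\pi=\lnot p\lor\lnot q$, the sequential construction $\bigl(\model^{\set{\set{\lnot p,\lnot q}}}_\su\bigr)^{\set{\set{q}}}_\su$ producing the $p\land\lnot q$-world $((w_0,1),1)$, and the check that neither choice of clause in $\D(\lnot\pi)$ lets $\ncforabt{\pi}{}$ reach a $p\land\lnot q$-world. The verification of the valuations in both constructions matches the paper's diagrams, so no further changes are needed.
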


{\medskip}

Another difference between forgetting and forgetting whether is that, while it is not possible to {\em forget the truth-value} of a contradiction, it {\em is} possible to {\em forget} a contradiction, as  $\D(\bot) \neq \varnothing$. In fact, $\models \ncfor{\bot}{\varphi}\ldimp\varphi$ and $\models\ncfor{\bot}{\varphi}\ldimp \psfor{\bot}{\varphi}$. Note, however, that if the agent knows a contradiction, the action of forgetting (the contradiction itself or any other formula) will not `fix' this. In fact, the action cannot turn an agent's knowledge contradictory or consistent if it was not that way before. 

\begin{proposition}
  Let $\pi$ be any propositional formula that is neither a tautology 
  nor a contradiction. Then, 
  \[ \modelsu \nc{\bot} \ldimp \ncfor{\pi}{\nc{\bot}}.\]
  \begin{proof}
    For the left-to-right direction, take any pointed model $(\model, w)$. The antecedent $\nc{\bot}$ states that $w$ has no successors and hence, by Definition \ref{def:multiclause}, neither does $(w,0)$ regardless of the chosen clause $D \in \D(\pi)$; thus, $\ncfor{\pi}{\nc{\bot}}$.

    For the other direction, argue by contrapositive. Assume that $\model,w\models\lnot\nc{\bot}$. Then, $w$ has at least one successor and hence, by Definition \ref{def:multiclause}, so does $(w,0)$ regardless of the chosen clause $D \in \D(\pi)$; thus, $\ncfor{\pi}{\lnot\nc{\bot}}$.
  \end{proof}
\end{proposition}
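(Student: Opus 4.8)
The plan is to exploit the fact that the model operation $\model \mapsto \model^{\set{D}}_\su$ leaves the accessibility structure untouched, so that the truth of $\nc{\bot}$ (``the agent's knowledge is inconsistent'', i.e.\ the current world is a dead end) is invariant under it. Concretely, I would fix an arbitrary pointed model $(\model, w)$ with $\model = \tupla{W,R,V}$ and an arbitrary clause $D \in \D(\pi)$, and read off directly from clause~2 of Definition~\ref{def:multiclause} that $(w,0)$ has an $R^{\set{D}}_\su$-successor in $\model^{\set{D}}_\su$ if and only if $w$ has an $R$-successor in $\model$. Hence
\[
\model^{\set{D}}_\su, (w,0) \modelsu \nc{\bot}
\quad\text{iff}\quad
\model, w \modelsu \nc{\bot}.
\]

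Second, I would observe that the hypothesis on $\pi$ guarantees $\D(\pi) \neq \varnothing$: by Lemma~\ref{lem:clauses} the map $\D(\cdot)$ respects logical equivalence and $\D(\top)=\varnothing$, so $\D(\pi)=\varnothing$ would force $\pi\equiv\top$, contrary to $\pi$ not being a tautology. This non-emptiness is precisely what makes the hypothesis necessary: if $\D(\pi)$ were empty then $\ncfor{\pi}{\nc{\bot}}$ would be vacuously valid by Definition~\ref{def:for} while $\nc{\bot}$ is not, and the biconditional would fail at any world with a successor.

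Putting the pieces together: by Definition~\ref{def:for} we have $\model, w \modelsu \ncfor{\pi}{\nc{\bot}}$ iff $\model^{\set{D}}_\su, (w,0) \modelsu \nc{\bot}$ for every $D\in\D(\pi)$, and since $\D(\pi)$ is non-empty this condition is, by the displayed equivalence, equivalent to $\model, w \modelsu \nc{\bot}$. As $(\model,w)$ was arbitrary, $\modelsu \nc{\bot} \ldimp \ncfor{\pi}{\nc{\bot}}$. I do not anticipate any genuine difficulty here: the argument is a direct unfolding of Definitions~\ref{def:multiclause} and~\ref{def:for}, and the only point deserving attention is the bookkeeping remark that $\pi$'s not being a tautology is what keeps the universal quantifier in Definition~\ref{def:for} from being vacuous. (If preferred, one could instead present the two directions of the biconditional separately, as in the analogous Proposition~\ref{PropConsK}, but the chain of equivalences above settles both at once.)
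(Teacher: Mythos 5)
Your proof is correct and follows essentially the same route as the paper's: both arguments reduce to the observation that clause~2 of Definition~\ref{def:multiclause} preserves the accessibility relation, so $(w,0)$ has a successor iff $w$ does, the paper presenting the two directions separately (the right-to-left one by contrapositive) where you chain the equivalences. Your explicit remark that the non-tautology hypothesis guarantees $\D(\pi)\neq\varnothing$, keeping the quantifier in Definition~\ref{def:for} non-vacuous, is a point the paper leaves implicit, but it is the same argument.
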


With respect to tautologies, forgetting behaves as forgetting whether: the clausal form of $\top$ is $\varnothing$, and therefore it is not possible to forget a tautology.

{\bigskip}

When compared with the action of forgetting whether, the action of forgetting is closer to the well-known action of belief contraction: both represent an epistemic action after which the agent does not know/believe a given formula, regardless of the epistemic attitude towards the formula's negation. This allows a more accurate comparison with a key concept within belief contraction: that of \emph{recovery}.

{\smallskip}

An action of forgetting a given $\pi$ might have side-effects: the agent might also forget a second formula $\varphi$. In such cases it seems desirable for an action of `remembering' $\pi$ to make the agent to remember $\varphi$ too.\footnote{Still, within \textit{AGM}, the recovery postulate is the most discussed, as there are examples showing that such behaviour is not always reasonable. See, e.g., \cite{DBLP:journals/jolli/Hansson93,DBLP:journals/jolli/Ferme98,Ferme2001}.} The forgetting action of this section satisfies a form of recovery, restricted to cases in which $\pi$ was known to begin with. For describing this we will use the public announcement operation in public announcement logic \cite{Plaza1989,GerbrandyGroeneveld1997}, represented syntactically with formulas of the form $\anuncio\pi$, as it matches closely the semantic nature of this approach.

\begin{proposition}
If $\pi$ is a propositional
  formula and $\varphi$ an arbitrary formula of $\mathcal{L}_{\Box}$ then
  \[ (\nc\pi\land \varphi) \limp  \ncfor{\pi}[\anuncio\pi]\varphi\]
is valid over the class of transitive models.
\end{proposition}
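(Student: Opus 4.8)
The plan is to unfold both the forgetting operation and the public announcement semantically and then exhibit an explicit relation between the original model $\model$ (at $w$) and the model $(\model^{\set D}_\su)^{\anuncio\pi}$ obtained by first forgetting $\pi$ and then announcing $\pi$. Concretely, fix a pointed model $(\model,w)$ with $\model=\langle W,R,V\rangle$, assume $\model,w\modelsu \nc\pi\land\varphi$, and pick an arbitrary clause $D\in\D(\pi)$; I must show $\model^{\set D}_\su,(w,0)\modelsu[\anuncio\pi]\varphi$, i.e. that $\varphi$ holds at $(w,0)$ in the submodel of $\model^{\set D}_\su$ consisting of the worlds satisfying $\pi$. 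The first step is to identify that submodel: by Lemma \ref{lem:clauseFails} the world $(v,1)$ fails $\bigvee D$, hence fails $\widehat{\D}(\pi)$, hence (by Lemma \ref{lem:clauses}) fails $\pi$; so every $(v,1)$-world is deleted by the announcement. Meanwhile each $(v,0)$-world satisfies $\pi$ if and only if $v$ does in $\model$, and since $\model,w\modelsu\nc\pi$, every $R$-successor of $w$ satisfies $\pi$; moreover $w$ itself satisfies $\pi$ (from $\nc\pi$ one does not get $\pi$ in general, but here $\pi$ is propositional and — wait, this needs care).

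The key observation to nail down is exactly which $(v,0)$-worlds survive and that the surviving part, as seen from $(w,0)$, is isomorphic (as a pointed model, up to the relevant generated submodel) to $(\model,w)$ itself. Here is where transitivity enters: I want the generated submodel of $(\model^{\set D}_\su)^{\anuncio\pi}$ rooted at $(w,0)$ to coincide with the image of the generated submodel of $\model$ rooted at $w$ under the map $v\mapsto(v,0)$. For that I need: (i) $(w,0)$ survives the announcement, i.e. $\model^{\set D}_\su,(w,0)\modelsu\pi$, which follows because $(w,0)$ has the same valuation as $w$ and $w\modelsu\pi$ — and $w\modelsu\pi$ holds because $w\modelsu\nc\pi$ together with transitivity is not quite it; rather, the clean route is that we only need $\pi$ to hold at worlds $R$-reachable from $w$, and by the definition of $R^{\set D}_\su$ and the announcement the truth of $\varphi$ at $(w,0)$ in the restricted model depends only on $(w,0)$ and its successors. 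So the plan is: show that for every $v$ with $w(R^{\set D}_\su)^*$-reachable after restriction, $v$ is of the form $(v',0)$ with $w R^+ v'$ (using transitivity to collapse paths), hence $v'\modelsu\pi$ by $\nc\pi$, hence it indeed survives; then check the valuation on $(v',0)$ agrees with $v'$, and the accessibility relation restricted to surviving worlds matches $R$ restricted to the $\pi$-worlds reachable from $w$. The map $v'\mapsto(v',0)$ is then a bisimulation between $(\model|_{\text{reachable }\pi\text{-worlds}}, w)$ and $((\model^{\set D}_\su)^{\anuncio\pi}, (w,0))$.

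Once that bisimulation is in hand, the conclusion is almost immediate: the announcement $\anuncio\pi$ applied to $\model$ at $w$ yields, by the same reasoning, a model whose generated submodel at $w$ is the restriction of $\model$ to $\pi$-worlds reachable from $w$ — and because $\model,w\modelsu\nc\pi$ (so all successors are $\pi$-worlds) and $w\modelsu\pi$, this restriction is just the $R$-generated submodel of $\model$ at $w$ itself. Hence $\model^{\anuncio\pi},w$ is bisimilar to $\model,w$, and $\model,w\modelsu\varphi$ gives $\model^{\anuncio\pi},w\modelsu\varphi$; transporting along the bisimulation above yields $(\model^{\set D}_\su)^{\anuncio\pi},(w,0)\modelsu\varphi$, i.e. $\model^{\set D}_\su,(w,0)\modelsu[\anuncio\pi]\varphi$. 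Since $D\in\D(\pi)$ was arbitrary, $\model,w\modelsu\ncfor\pi[\anuncio\pi]\varphi$, and since $(\model,w)$ was arbitrary the validity follows.

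The main obstacle I anticipate is the bookkeeping around \emph{why} the surviving worlds are all of the $(\cdot,0)$-form and why transitivity is genuinely needed rather than just reflexivity or nothing at all. The subtle point is that $\model,w\modelsu\nc\pi$ only controls the \emph{immediate} $R$-successors of $w$; after one step inside $\model^{\set D}_\su$ and the announcement we are at some $(v,0)$ with $\pi$ true there, but to know that $(v,0)$ in turn cannot reach a $\pi$-failing-but-undeleted world — and to know that iterating the relation does not escape the $\nc\pi$-controlled zone — one uses that $wRv$ and $vRv'$ imply $wRv'$, so $v'$ is again an immediate successor of $w$ and hence a $\pi$-world. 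Without transitivity, a successor of a successor of $w$ need not satisfy $\pi$, and then the restricted model after the announcement would not match the generated submodel of $\model$ at $w$, breaking the bisimulation; so transitivity is exactly what makes ``all reachable worlds are $\pi$-worlds'' propagate. The rest — checking the valuation clause $(w,0)\in V^{\set D}_\su(p)\iff w\in V(p)$, and that $R^{\set D}_\su$ restricted to $(\cdot,0)$-worlds is a faithful copy of $R$ — is routine from Definition \ref{def:multiclause}.
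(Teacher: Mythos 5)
Your overall strategy is the paper's own: the paper also observes that, by transitivity, every world in the submodel generated by $w$ already satisfies $\pi$, so the announcement deletes exactly the $(\cdot,1)$-copies and the submodel generated by $(w,0)$ in $(\model^{\set{D}}_\su)$ after announcing $\pi$ is isomorphic to the submodel generated by $w$ in $\model$; your map $v\mapsto(v,0)$ is precisely that isomorphism (packaged as a bisimulation, which is all that is needed since $\varphi\in\LanBox$), and your diagnosis of why transitivity, and not less, is required matches the paper's remark.

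There is, however, one genuine loose end -- exactly the point you flagged with ``this needs care'' and then never closed. Your final paragraph asserts $w\modelsu\pi$, but $\nc\pi$ does not yield $\pi$ at $w$ on transitive models (no reflexivity is assumed), and your ``clean route'' remark does not substitute for it: to evaluate $\varphi$ at $(w,0)$ in the restricted model you must first know that $(w,0)$ survives the announcement, i.e.\ that $\model^{\set{D}}_\su,(w,0)\modelsu\pi$, which (since $\pi$ is propositional and $(w,0)$ copies $w$'s valuation) is equivalent to $\model,w\modelsu\pi$. The repair is a one-line case split using the vacuity clause of public announcement semantics (recalled by the paper itself in a footnote of Section \ref{SecAlt}): if $\model,w\not\modelsu\pi$, then $\model^{\set{D}}_\su,(w,0)\not\modelsu\pi$ and $[\anuncio\pi]\varphi$ holds at $(w,0)$ vacuously (this also disposes of the degenerate case where $\pi$ is a contradiction); if $\model,w\modelsu\pi$, your isomorphism/bisimulation argument goes through as written. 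With that case distinction inserted your proof is complete -- and, for what it is worth, the paper's sketch glosses over the same point, since the generated submodel of $\model$ at $w$ contains $w$ itself, which need not satisfy $\pi$.
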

  
\proof
We will assume familiarity with the semantics of $\anuncio\pi$ and give only a sketch of the argument. Suppose that $\model$ is a transitive model and $w$ is a world of $\model$ such that $\model, w\models \nc\pi\land\varphi$. If $D\in \D(\pi)$ is arbitrary, then $\model_u^{\set D}$ is obtained from $\model$ by adding a copy of each world where $D$ fails, but after applying $\anuncio\pi$, we pass to the model $(\model_u^{\set D})_{\anuncio\pi}$ where all such worlds are deleted. Since every world accessible from $w$ (and hence, by transitivity, in the submodel generated by $w$) already satisfied $\pi$, the submodel generated by $w$ in $\model$ is isomorphic to the submodel generated by $(w,0)$ in $(\model_u^{\set D})_{\anuncio\pi}$. Hence $(\model_u^{\set D})_{\anuncio\pi},(w,0)\models \varphi$. It follows that $\model_u^{\set D},(w,0)\models [\anuncio\pi ]\varphi$ and, since $D$ was arbitrary, that $\model ,(w,0)\models \ncfor{\pi} [\anuncio\pi ]\varphi$. Since $\model$ and $w$ were also arbitrary, the claim follows.
\endproof

\section{An axiomatization}\label{SecAx}

The operation of Definition \ref{def:multiclause}, with a finite (possibly empty) set of non-tau\-to\-log\-i\-cal clauses $\D$ as a parameter, produces a new model that contains $|\D|+1$ copies of the original one, with one copy being identical to the original and with the rest being the result of falsifying uniformly each one of the clauses in $\D$. This and other similar effects can be achieved by so-called \emph{action models}.

\begin{definition}[Action model]
Let $\mathcal L$ be a formal language that can be interpreted over the models of Definition \ref{def:model}. An \emph{$\mathcal L$-action model} is defined as a tuple $\U = \tupla{\E, \R, \pre, \post}$, where $\E$ is a non-empty set of actions, $\R \subseteq \E \times \E$ is a binary relation, $\pre: \E \to \mathcal L$ a {\em precondition function} assigning a formula of $\mathcal L$ to each action in $\E$, and $\post: (\E \times  \at )  \to \mathcal L  $ a {\em postcondition function} assigning a formula of $\mathcal L$ to each pair of atom in $\at$ and action in $\E$. A \emph{pointed action model} is a pair $(\U, \e)$ where $\U$ is an action model and $\e$ an element of its domain.
\end{definition}

Action models are intended to be applied to relational models; such application produces a new relational model, defined as follows.

\begin{definition}[Product update]\label{def:productpdate}
  Let $\model = \langle W, R, V \rangle$ be a model and $\U = \tupla{\E, \R, \pre, \post}$ an action model. The new model $\model \otimes \U = \tupla{W', R', V'}$ is given by:
\begin{itemize}
\item $W' := \set{(w, \e) \in (W \times \E) \mid \model, w \models \pre(\e)}$;
\item $(w,\e) R' (u, \f) \;\;\text{iff}\;\; wRu \;\text{and}\; \e\R\f$; and
\item for every $p \in \at$, $V'(p) := \set{(w,\e) \in W' \mid \model, w \models \post(\e,p)}.$
\end{itemize}
\end{definition}

In words, the new model's domain is the restricted Cartesian product between $\model$'s and $\U$'s: $(w, \e)$ is a world in $\model \otimes \U$ if and only if $w$ satisfies $\e$'s precondition. In this new model, the agent cannot distinguish world $(u, \f)$ from world $(w, \e)$ if and only if she did not distinguish $u$ from $w$ in $\model$ and could not distinguish $\f$ from $\e$ in $\U$. Finally, a  world $(w, \e)$ satisfies an atom $p$ if and only if $w$ satisfied $p$'s postcondition at $\e$ in $\model$.

Action models will be useful to us since the model operation of Definition \ref{def:multiclause} can be represented by the specific action model described below.
\begin{definition}\label{def:ouractionmodel}
  Let $\D=\{ D_i : i\in I\}$ be a finite (possibly empty) set of
  non-tautological clauses. The action model $\U_{\D} = \tupla{\E,
    \R, \pre, \post}$ is given by 
  \[
    \E := \set{ \e_{i}}_{i\in \set 0\cup I},
    \qquad
    \R := \E \times \E,
    \qquad
    \pre(\e_{i}) := \top\text{ for all }i\in \set 0\cup I,
  \]
  for every $p \in \at$, $\post(\e_{0},p) := p$ and, for $i \in I$,
\[    \;
    \post(\e_{i},p) := \left\{
      \begin{array}{ll@{}}
        p    & \text{if } \set{p, \lnot p} \cap D_{i} = \varnothing; \\
        \top & \text{if } \lnot p \in D_{i}; \\
        \bot & \text{if } p \in D_{i} .
      \end{array}
    \right.
  \]
\end{definition}

  \begin{example}
    Consider $\pi = p\land q$, and recall that
    $\D(\pi)=\set{\set{p},\set{q}}$ and 
    $\D(\lnot \pi) = \set{\set{\lnot p,\lnot q}}$. Then, the action
    model $\U_{\set{\set{p},\set{\lnot p,\lnot q}}}$, defined using one
    clause in $\D(\pi)$ and one in $\D(\lnot\pi)$, is given by:
      \begin{center}
    \begin{tikzpicture}[->,>=stealth', very thick]
      \node [mundo] (e0) at (0, 1.5) {\scriptsize $\top$};
      \node [mundo] (e1) at (-1.2, 0) {\scriptsize $\top$};
      \node [mundo] (e2) at (1.2, 0) {\scriptsize $\top$};

      \node [texto, node distance=2pt, right=of e0.east]
      {\scriptsize $\e_0$}; 
      \node [texto, node distance=2pt, below=of e1.south]
      {\scriptsize $\e_{\set{p}}$}; 
      \node [texto, node distance=2pt, below=of e2.south]
      {\scriptsize $\e_{\set{\lnot p,\lnot q}}$}; 

      \node [texto, align=left] at (1.4,1.9)
      {\scriptsize $\post(\e_{0},\lambda) = \lambda$};

      \node [texto, align=left] at (-3,1)
      {\scriptsize $\post(\e_{\set{p}},p) = \bot$}; 
      \node [texto, align=left] at (-3,.7)
      {\scriptsize $\post(\e_{\set{p}},\lambda) = \lambda$}; 
      \node [texto, align=left] at (-2.4,.4)
      {\scriptsize $\lambda\notin \set{p}$};

      \node [texto, align=left] at (3,1)
      {\scriptsize $\post(\e_{\set{\lnot p,\lnot q}},p) = \top$}; 
      \node [texto, align=left] at (3,.7)
      {\scriptsize $\post(\e_{\set{\lnot p,\lnot q}},q) = \top$}; 
      \node [texto, align=left] at (3,.4)
      {\scriptsize $\post(\e_{\set{\lnot p,\lnot q}},\lambda) = \lambda$}; 
      \node [texto, align=left] at (3.75,.1)
      {\scriptsize $\lambda\notin\set{p,q}$};

      \path (e0) edge [loop above] (e0)
                  edge [<->] (e1)
                  edge [<->] (e2)
            (e1) edge [loop left] (e1)
                  edge [<->] (e2)
            (e2) edge [loop right] (e2);
    \end{tikzpicture}
  \end{center}
  Preconditions are represented inside the states. For states other than $\e_0$, the postconditions are set up to falsify each state's respective clause.
  \end{example}

Note how, in every action model $\U_{\D}$, the relation $\E$ is the full Cartesian product. Thus, the upgrade operation of Definition \ref{def:productpdate} preserves many relational properties, including seriality, reflexivity, symmetry, transitivity and euclideanity.
These action models give us an alternative representation of the
models $\model^{\D}_u$:

\begin{proposition}
  Let $\model$ be a model and $\D$ a finite (possibly empty) set of non-tautological clauses. Then, the models
  $\model^{\D}_u$ from Definition \ref{def:multiclause}
  and $\model \otimes \U_{\D}$ from Definitions
  \ref{def:ouractionmodel} and \ref{def:productpdate} are isomorphic. 
  \begin{proof}
    Take a model $\model = \langle W, R, V \rangle$; it will be proved that 
    $\model^{\D}_u = \langle W^\D_u,
    R^\D_u, V^\D_u\rangle$ and
    $\model \otimes \U_{\D} = \langle W', R', V'
    \rangle$ are isomorphic, witness the function $f\colon W^\D_u\to W'$ given by $f(w,i)=(w,\e_i)$. 

    First, note how $(w,i)R^\D_u (v,j)$ iff $(w,\e_i) R' (v,\e_j)$. This is because, by Definition~\ref{def:multiclause},
    $(w,i) R^\D_u (v,j)$ iff $wRv$. Moreover, by Definition~\ref{def:productpdate}, $(w,\e_i) R' (v,\e_j)$ iff $wRv$ and
    $\e_i\R\e_j$. But, by Definition~\ref{def:ouractionmodel}, $\R$ is the
    total relation in $\E$, so $(w,i) R^\D_u (v,j)$ iff $(w,\e_i) R'
    (v,\e_j)$. 

    Now, to prove that $(w,i)\in V^{\D}_\su(p)$ iff
    $f(w,i)\in V'(p)$, observe that, by Definition~\ref{def:multiclause},
    $(w,0)\in V^{\D}_\su(p)$ iff $w\in V(p)$. By
    Definition~\ref{def:productpdate}, $(w,\e_0)\in V'(p)$ iff
    $\model,w \models \post(\e_0,p)$, and by
    Definition~\ref{def:ouractionmodel}, $\post(\e_0,p) = p$, so
    $(w,\e_0)\in V'(p)$ iff $\model,w\models p$ iff $w\in V(p)$ iff $(w,0)\in V^{\D}_\su(p)$. For $i\neq 0$, by
    Definition~\ref{def:multiclause}, $(w,i)\in V^{\D}_\su(p)$ iff
    \begin{equation}
      \label{eq:2}
      \mbox{either}\quad \set{p,\lnot p}\cap D_i = \varnothing \quad
      \mbox{and} \quad w\in V(p), \quad \mbox{or else}\quad \lnot
      p\in D_i.
    \end{equation}
    By Definition~\ref{def:productpdate},
    \begin{equation}
      \label{eq:1}
      (w,\e_i)\in V'(p) \quad\mbox{iff}\quad \model,w\models
      \post(\e_i,p);
    \end{equation}
    but by Definition~\ref{def:ouractionmodel}, $(\model,w)$ may only
    satisfy $\post(\e_i,p)$ in two cases, when it is $p$ and $\top$,
    as $\bot$ is unsatisfiable. Then, recalling~\eqref{eq:1},
    $(w,\e_i)\in V'(p)$ iff 
    \begin{equation}
      \label{eq:3}
      \model,w\models p\quad\mbox{and}\quad \set{p,\lnot p}\cap D_i
      = \varnothing,\quad\mbox{or else}\quad
      \model,w\models\top\quad\mbox{and}\quad \lnot p\in D_i. 
    \end{equation}
    Note the equivalence of~\eqref{eq:2} and~\eqref{eq:3}, which
    proves that $(w,i)\in V^{\D}_\su(p)$ iff $(w,\e_i)\in V'(p)$. 
  \end{proof}
\end{proposition}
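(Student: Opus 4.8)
The plan is to write down an explicit isomorphism and then verify, in order, that its underlying map is a bijection, that it preserves the accessibility relation (in both directions), and that it preserves the valuation. Fix $\model = \langle W, R, V\rangle$, write $\model^{\D}_\su = \langle W^{\D}_\su, R^{\D}_\su, V^{\D}_\su\rangle$ and $\model \otimes \U_{\D} = \langle W', R', V'\rangle$, and enumerate the actions of $\U_{\D}$ as $\set{\e_i : i \in \set 0 \cup I}$ as in Definition~\ref{def:ouractionmodel}. The candidate is $f\colon W^{\D}_\su \to W'$ given by $f(w,i) := (w,\e_i)$.

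For bijectivity, I would observe that $W^{\D}_\su = W \times (\set 0 \cup I)$ by Definition~\ref{def:multiclause}, while every precondition $\pre(\e_i)$ equals $\top$, so the restricted product $W'$ of Definition~\ref{def:productpdate} is in fact the full product $W \times \E$; hence $f$ is simply $\mathrm{id}_W$ paired with the bijection $i \mapsto \e_i$ and is therefore a bijection. For the relation, Definition~\ref{def:multiclause} gives $(w,i) R^{\D}_\su (v,j)$ iff $wRv$, whereas Definition~\ref{def:productpdate} gives $(w,\e_i) R' (v,\e_j)$ iff $wRv$ and $\e_i \R \e_j$; since $\R = \E \times \E$ is total by Definition~\ref{def:ouractionmodel}, the second conjunct is vacuous, and the two conditions coincide.

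The remaining point is the valuation. Fix $p \in \at$. The case $i = 0$ is immediate, since $\post(\e_0,p) = p$ forces $(w,\e_0) \in V'(p)$ iff $w \in V(p)$, which is exactly the condition defining $(w,0) \in V^{\D}_\su(p)$. For $i \in I$, I would compare the two-case description of $(w,i) \in V^{\D}_\su(p)$ in Definition~\ref{def:multiclause} with the evaluation of $\post(\e_i,p)$ from Definition~\ref{def:ouractionmodel}: the latter formula is $\top$ exactly when $\neg p \in D_i$, $\bot$ exactly when $p \in D_i$, and $p$ exactly when $\set{p,\neg p} \cap D_i = \varnothing$, so $(w,\e_i) \in V'(p)$ holds iff $\neg p \in D_i$ or else ($\set{p,\neg p} \cap D_i = \varnothing$ and $w \in V(p)$) --- verbatim the condition from Definition~\ref{def:multiclause}. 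Here it matters that $D_i$ is non-tautological, so the three postcondition cases are exhaustive and mutually exclusive; this is the only place where a bit of care is needed, and it is the closest thing to an obstacle in an otherwise routine bookkeeping argument. Having matched all three pieces of structure, I would conclude that $f$ is the desired isomorphism.
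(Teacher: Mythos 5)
Your proposal is correct and follows essentially the same route as the paper's own proof: the same witness map $f(w,i)=(w,\e_i)$, relation preservation via totality of $\R$, and a case-by-case matching of the postconditions with the valuation clauses (the paper likewise relies on non-tautologicality, via the unsatisfiability of $\bot$, to exclude the case $p\in D_i$). Your explicit remark that all preconditions are $\top$, so the restricted product is the full product and $f$ is a bijection, is a small point the paper leaves implicit, but it does not change the argument.
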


This correspondence of our model operation with the effect of $\U_{\D}$ allows the use of the action models machinery for obtaining an axiom system for the language $\mathcal{L}_{\Box\ModForAbt}$ with respect to our semantic models. First, recall the definition of the satisfaction relation for action model logic.

\begin{definition}
 Let $\AMA$ be a set of \emph{finite} pointed $\LanBox$-action models, that is, a set containing pointed $\LanBox$-action models (thus with precondition and postconditions functions returning formulas in $\LanBox$) whose domain is finite (non-empty) and in which each action affects the truth-value of at most a finite number of atomic propositions.\footnote{This finiteness condition is required to allow each pointed action model $\ncame{}$ to be associated to a syntactic object and thus to be used as a modality within formulas. For more details, the reader is referred to Section 6.1 of \cite{vanDitmarschEtAl2007}.} The language ${\mathcal L}_{\nc\AMA}$ extends $\LanBox$ with new formulas of the form $\ncame\varphi$ with $(\U,\e)\in\AMA$. Let $\model = \tupla{W, R,V}$ and $w\in W$. The satisfaction relation of Definition \ref{DefSatBasic} is extended by setting $\model, w \models \ncame{\varphi}$ if and only if
\[\model, w \models \pre(\e) \quad\Rightarrow\quad M \otimes \U, (w,\e) \models \varphi.\]
  The set of valid formulas of ${\mathcal L}_{\nc\AMA}$ is denoted by ${\rm Log}_{\nc\AMA}$.
\end{definition}

The following result is based on the action model axiomatization provided in Section 6.6 of \cite{vanDitmarschEtAl2007} together with the remarks of \cite{WangCao2013} (the latter in the context of public announcements). Recall that the logic $K$ contains propositional tautologies, modus ponens, the distribution axiom $\vdash \nc(\varphi \limp \psi) \limp (\nc{\varphi} \limp \nc{\psi})$ and the necessitation rule that derives $\vdash \nc{\varphi}$ from $\vdash \varphi$.

\begin{theorem}\label{TheoAxAction} 
  Let $\Lambda$ be any of the logics $K$, $T$, $K4$, $K5$, $S4$, $S5$ and let $\AMA$ be a set of $\Lambda$-complying \emph{finite} pointed $\LanBox$-action models. The logic ${\rm Log}_{\nc\AMA}$ is axiomatized by the modal logic $\Lambda$ together with the following axioms and rules for all $(\U,\e)\in\AMA$:
  \begin{center}
    \begin{small}
      \renewcommand{\arraystretch}{1.2}
      \begin{tabular}{ll}
        \toprule
        $\vdash \ncame{p} \;\ldimp\; \left(\pre(\e) \limp \post(\e,p) \right)$ &
        $\vdash \ncame{\nc{\varphi}} \;\ldimp\;\left(\pre(\e) \limp \bigwedge_{\f \in \R[\e]} \nc{\ncamf{\varphi}} \right)$ \\
        $\vdash \ncame{\lnot \varphi} \;\ldimp\; \left(\pre(\e) \limp \lnot \ncame{\varphi} \right)$ &
        $\vdash \ncame{(\varphi \limp \psi)} \;\limp\; \left(\ncame{\varphi} \limp \ncame{\psi} \right)$ \\
        $\vdash \ncame{(\varphi \land \psi)} \;\ldimp\; \left(\ncame{\varphi} \land \ncame{\psi} \right)$ &
        From $\vdash \varphi$ infer $\vdash \ncame{\varphi}$ \\
        \bottomrule
      \end{tabular}
    \end{small}
  \end{center}
\end{theorem}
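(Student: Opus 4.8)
The statement is a standard reduction-axiom completeness result for action model logic, and the proof has two halves: soundness of the listed axioms/rules, and completeness via a translation argument. The plan is to follow the template of Section 6.6 of \cite{vanDitmarschEtAl2007}, adapting it to allow postconditions as in \cite{WangCao2013}.

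For soundness, I would verify each of the six schemata by unwinding the semantics of $\ncame{\cdot}$ (Definition of the satisfaction relation for $\LanBox$-action models) together with product update (Definition \ref{def:productpdate}). The atomic axiom $\ncame p \ldimp (\pre(\e)\limp\post(\e,p))$ is immediate from $V'(p) = \set{(w,\e)\in W' \mid \model,w\models\post(\e,p)}$; the negation and implication axioms reflect the conditional (``$\pre(\e)\Rightarrow$'') nature of $\ncame{\cdot}$ and the fact that, once $(w,\e)\in W'$, product update is deterministic; the conjunction axiom is routine; and the knowledge axiom $\ncame{\nc\varphi}\ldimp(\pre(\e)\limp\bigwedge_{\f\in\R[\e]}\nc\ncamf\varphi)$ follows from the clause $(w,\e)R'(u,\f)$ iff $wRu$ and $\e\R\f$, noting that $(u,\f)$ is a world of $\model\otimes\U$ exactly when $\model,u\models\pre(\f)$, which is exactly the antecedent hidden inside $\ncamf\varphi$. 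The action-necessitation rule is sound because validity is preserved under passing to any $\model\otimes\U$. One must also check that $\Lambda$'s frame conditions are preserved: since each $\U$ is $\Lambda$-complying and the product of two $\Lambda$-frames is again a $\Lambda$-frame, the axioms of $\Lambda$ remain valid on the relevant class.

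For completeness, the key observation is that the six reduction axioms let one push every $\ncame{\cdot}$ operator inward past $\lnot$, $\land$, $\limp$, $\nc$ and eventually eliminate it at atoms, \emph{provided} one also has nested-action-model laws --- but here the actions in $\AMA$ are not closed under composition, so instead one argues by an inside-out translation: define a complexity measure on $\mathcal{L}_{\nc\AMA}$-formulas (e.g.\ a lexicographic pair counting, say, the sum of action-model sizes weighting outer occurrences more, together with connective depth) and show that each reduction axiom, read left to right, strictly decreases it; since the knowledge axiom replaces $\ncame{\nc\varphi}$ by a finite conjunction $\bigwedge_{\f\in\R[\e]}\nc\ncamf\varphi$ in which $\ncamf{\cdot}$ again faces $\varphi$ rather than $\nc\varphi$, the measure drops, and finiteness of $\R[\e]$ and of the set of affected atoms keeps everything well-defined. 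This yields, for every $\chi\in\mathcal{L}_{\nc\AMA}$, a provably equivalent $\chi^{t}\in\LanBox$ with $\vdash\chi\ldimp\chi^{t}$. Then: if $\models\chi$, soundness gives $\models\chi^{t}$, whence by completeness of $\Lambda$ for its frame class $\vdash_{\Lambda}\chi^{t}$, and so $\vdash\chi$; conversely $\vdash\chi$ implies $\models\chi$ by soundness.

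\textbf{Main obstacle.} The genuinely delicate point is the termination of the translation: one has to define the complexity measure so that the knowledge-axiom rewrite $\ncame{\nc\varphi}\rightsquigarrow\bigwedge_{\f\in\R[\e]}\nc\ncamf\varphi$ decreases it even though the \emph{number} of $\ncame{\cdot}$-operators can increase (there are $|\R[\e]|$ copies on the right) and a $\nc$ is pushed under each $\ncamf{\cdot}$. The standard fix, which I would adopt, is to measure the ``translation cost'' of a formula recursively so that the cost of $\ncame\psi$ is some strictly increasing function of the cost of $\psi$ large enough to dominate $|\E|$-fold branching --- e.g.\ defining it by the same recursion as in \cite[\S6.6]{vanDitmarschEtAl2007}, where the cost of $\ncame\psi$ is exponential in the cost of $\psi$ so that distributing over $\nc$ and over the finite conjunction still lowers the overall figure. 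The presence of non-trivial postconditions does not affect this argument (it only changes the atomic base case from $\ncame p\ldimp\pre(\e)$ to $\ncame p\ldimp(\pre(\e)\limp\post(\e,p))$, with $\post(\e,p)\in\LanBox$), which is precisely the content of the \cite{WangCao2013} remark; so once the measure is set up correctly the rest is bookkeeping.
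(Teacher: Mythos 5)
Your proposal is correct and coincides with the paper's own treatment: the paper does not prove Theorem \ref{TheoAxAction} itself but imports it from Section 6.6 of \cite{vanDitmarschEtAl2007} together with the postcondition remarks of \cite{WangCao2013}, and your argument is exactly that standard one (semantic verification of the reduction axioms plus an inside-out translation whose termination is guaranteed by a suitable complexity measure dominating the $\bigwedge_{\f\in\R[\e]}$-branching). Your identification of termination of the rewriting as the only delicate point, and the observation that nontrivial postconditions only change the atomic base case, match the intent of the cited sources.
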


This result can be used to obtain an axiom system for our particular modality $\ncforabt{\pi}{}$. Given clauses $D_{1} \in \D(\pi)$ and $D_{2} \in \D(\lnot \pi)$, our axiom system will use the auxiliary modalities $\ncfaln{D_{1}}{D_{2}}{}$, $\ncfalt{D_{1}}{D_{2}}{}$ and $\ncfalb{D_{1}}{D_{2}}{}$, whose semantic interpretation is as follows:
  \begin{flushleft}
    \begin{tabular}{l@{\quad\text{iff}\quad}l}
      $\model, w \models \ncfaln{D_{1}}{D_{2}}{\varphi}$ & $\model^{\set{D_1,D_2}}, (w, 0) \models \varphi$;\\
      $\model, w \models \ncfalt{D_{1}}{D_{2}}{\varphi}$ & $\model^{\set{D_1,D_2}}, (w, 1) \models \varphi$; and\\
      $\model, w \models \ncfalb{D_{1}}{D_{2}}{\varphi}$ & $\model^{\set{D_1,D_2}}, (w, 2) \models \varphi$.\\
    \end{tabular}
  \end{flushleft}

Observe how $\ncfaln{D_1}{D_2}{}$, $\ncfalt{D_1}{D_2}{}$ and
$\ncfalb{D_1}{D_2}{}$ correspond, respectively, to
$\ncam{\Udd}{\e_{0}}{}$, $\ncam{\Udd}{\e_{1}}{}$ and
$\ncam{\Udd}{\e_{2}}{}$, with $\U_{\D}$ the $\LanBox$-action model of
Definition \ref{def:ouractionmodel}. Moreover, note how the use of
$\U_{\D}$ within a modality is proper, as it is a \emph{finite} action
model: it has a non-empty finite domain and, given that both $D_1$ and
$D_2$ are finite (Lemma \ref{lem:clauses}), each one of its actions
changes the truth-value of at most a finite number of atomic
propositions. Finally, note how $\ncforabt{\pi}{\varphi}$ is
equivalent to $\bigwedge_{D_1\in\D(\pi)}\bigwedge_{D_2\in\D(\neg
  \pi)}\ncfaln{D_1}{D_2}{\varphi}$. Our axiomatization is, then, as
follows:

\begin{definition}[The axiom system ${\rm Ax}_{\nc\ModForAbt\AMD}$]
Let $\AMD$ be the set of all pointed action models of the form $({\Udd},{\e_{i}}){}$ with $i\in \{0,1,2\}$ (recall that $\pre(\e_{0}) = \pre(\e_{1}) = \pre(\e_{2})$) and $D_j$ a finite non-tautological clause; let $\mathcal L_{\nc\ModForAbt\AMD}$ be the extension of $\LanFor$ with expressions of the form $[{\U},\e]\varphi$ with $({\U},\e)\in\AMD$.

The set of axioms ${\rm Ax}_{\nc\ModForAbt\AMD}$ is defined by the following schemas:
\begin{center}
  \begin{small}
    \renewcommand{\arraystretch}{1.2}
    \begin{tabular}{ll}
      \toprule
      $\vdash\ncforabt{\pi}{\varphi}
    \;\ldimp\;
    \bigwedge_{D_1 \in \D(\pi)}\bigwedge_{D_2\in\D(\neg \pi)} \ncfaln{D_1}{D_2}{\varphi}$&\\
      \midrule
      $\vdash \ncfaln{D_1}{D_2}{p} \;\ldimp\; p$    & for all $p \in \at$ \\
      \midrule
      $\vdash \ncfalt{D_1}{D_2}{p} \;\ldimp\; p$    & if $\set{p, \lnot p} \cap D_{1} = \varnothing$ \\
      $\vdash \ncfalt{D_1}{D_2}{p} \;\ldimp\; \top$ & if $\lnot p \in D_{1}$ \\
      $\vdash \ncfalt{D_1}{D_2}{p} \;\ldimp\; \bot$ & if $p \in D_{1}$ \\
      \midrule
      $\vdash \ncfalb{D_1}{D_2}{p} \;\ldimp\; p$    & if $\set{p, \lnot p} \cap D_{2} = \varnothing$ \\
      $\vdash \ncfalb{D_1}{D_2}{p} \;\ldimp\; \top$ & if $\lnot p \in D_{2}$ \\
      $\vdash \ncfalb{D_1}{D_2}{p} \;\ldimp\; \bot$ & if $p \in D_{2}$ \\
      \midrule      
      $\vdash \ncfaldd{\lnot \varphi} \;\ldimp\; \lnot \ncfaldd{\varphi}$ \\
      $\vdash \ncfaldd{(\varphi \land \psi)} \;\ldimp\; \left(\ncfaldd{\varphi} \land \ncfaldd{\psi} \right)$ \\
      $\vdash \ncfaldd{\nc{\varphi}} \;\ldimp\; \nc{\left(\ncfaln{D_1}{D_2}{\varphi} \land \ncfalt{D_1}{D_2}{\varphi} \land \ncfalb{D_1}{D_2}{\varphi}\right)} $ \\
      $\vdash \ncfaldd{(\varphi \limp \psi)} \;\limp\; \left(\ncfaldd{\varphi} \limp \ncfaldd{\psi} \right)$ \\
      From $\vdash \varphi$ infer $\vdash \ncfaldd{\varphi}$ \\
      \bottomrule
    \end{tabular}
  \end{small}
\end{center}
where $\ncfaldd{}$ is any of $\ncfaln{D_1}{D_2}{}$, $\ncfalt{D_1}{D_2}{}$ and $\ncfalb{D_1}{D_2}{}$ and $\pi$ is propositional.
\end{definition}

It is now possible to state this paper's main result:

\begin{theorem}\label{TheoFAxiom}
Let $\Lambda$ be any of the logics $K$, $T$, $K4$, $K5$, $S4$, $S5$. A formula $\varphi\in \LanForAct$ is valid over the class of $\Lambda$-complying models if and only if $\Lambda+{\rm Ax}_{\nc\ModForAbt\AMD}\vdash\varphi$.
\end{theorem}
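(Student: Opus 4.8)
The plan is to reduce Theorem~\ref{TheoFAxiom} to Theorem~\ref{TheoAxAction} by viewing the $\ncforabt{\pi}{}$ modality as a finite conjunction of action-model modalities, and then to transport soundness and completeness across this translation. Concretely, I would first observe that $\AMD$ is (up to the equivalence described in the excerpt) exactly a set of $\Lambda$-complying finite pointed $\LanBox$-action models: each $\U_{\D}$ with $\D=\set{D_1,D_2}$ has the full Cartesian relation on its action set, so by the remark preceding Theorem~\ref{TheoFAxiom} it preserves seriality, reflexivity, symmetry, transitivity and euclideanity, and hence it is $\Lambda$-complying for each $\Lambda\in\set{K,T,K4,K5,S4,S5}$; moreover its domain is finite and each action changes only finitely many atoms (since clauses are finite by Lemma~\ref{lem:clauses}). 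Therefore Theorem~\ref{TheoAxAction} applies and gives a sound and complete axiomatization ${\rm Log}_{\nc\AMD}$ of the language $\mathcal L_{\nc\AMD}$ (the extension of $\LanBox$ by the modalities $\ncfaln{D_1}{D_2}{}$, $\ncfalt{D_1}{D_2}{}$, $\ncfalb{D_1}{D_2}{}$).

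Second, I would pin down the relationship between $\LanForAct$ and $\mathcal L_{\nc\AMD}$. The language $\LanForAct$ extends $\mathcal L_{\nc\AMD}$ by the single extra modality family $\ncforabt{\pi}{}$. Using Definition~\ref{def:forabt} together with the isomorphism $\model^{\D}_u\cong\model\otimes\U_{\D}$ (the Proposition just before Theorem~\ref{TheoAxAction}), one checks directly from the semantics that
\[
\model,w\modelsu\ncforabt{\pi}{\varphi}\quad\text{iff}\quad\model,w\modelsu\bigwedge_{D_1\in\D(\pi)}\bigwedge_{D_2\in\D(\neg\pi)}\ncfaln{D_1}{D_2}{\varphi}.
\]
Since $\D(\pi)$ and $\D(\neg\pi)$ are finite (Lemma~\ref{lem:clauses}), this conjunction is a genuine formula of $\mathcal L_{\nc\AMD}$, so the first axiom schema of ${\rm Ax}_{\nc\ModForAbt\AMD}$ is a valid ``reduction'' axiom. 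Define a translation $t\colon\LanForAct\to\mathcal L_{\nc\AMD}$ that recursively replaces every outermost $\ncforabt{\pi}{}$ by the corresponding finite conjunction and otherwise commutes with the connectives and modalities; prove by induction on formulas that $\models\varphi\ldimp t(\varphi)$ (using the displayed equivalence and the fact that $\modelsu$ respects boolean and $\nc$-steps) and that $\Lambda+{\rm Ax}_{\nc\ModForAbt\AMD}\vdash\varphi\ldimp t(\varphi)$ (using the first axiom schema plus the $\ncfaldd{}$-normality axioms and rule to push the translation through the connectives under the auxiliary modalities, exactly as in the standard reduction-axiom arguments of \cite{vanDitmarschEtAl2007}).

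Third, I would show that the remaining schemas of ${\rm Ax}_{\nc\ModForAbt\AMD}$ are, under the identification of $\ncfaln{D_1}{D_2}{}$, $\ncfalt{D_1}{D_2}{}$, $\ncfalb{D_1}{D_2}{}$ with $\ncam{\Udd}{\e_0}{}$, $\ncam{\Udd}{\e_1}{}$, $\ncam{\Udd}{\e_2}{}$, precisely the axioms of Theorem~\ref{TheoAxAction} specialised to these action models (the atomic-permanence axioms unravel the three cases of $\post(\e_i,p)$; the $\nc$-axiom uses $\R[\e_i]=\set{\e_0,\e_1,\e_2}$ for every $i$; and the negation, conjunction, implication axioms and the necessitation rule are common to all action-model logics and are exactly the $\ncfaldd{}$ schemas with the precondition $\top$ suppressed, since $\pre(\e_i)=\top$). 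Hence $\Lambda+{\rm Ax}_{\nc\ModForAbt\AMD}$ restricted to $\mathcal L_{\nc\AMD}$ is (equivalent to) the system axiomatizing ${\rm Log}_{\nc\AMD}$. Soundness of the full system over $\Lambda$-complying models then follows from soundness of Theorem~\ref{TheoAxAction} plus validity of the reduction axiom; for completeness, given a $\Lambda$-valid $\varphi\in\LanForAct$, apply the translation to get a $\Lambda$-valid $t(\varphi)\in\mathcal L_{\nc\AMD}$, derive it in $\Lambda+{\rm Ax}_{\nc\ModForAbt\AMD}$ via Theorem~\ref{TheoAxAction}, and then use the provable equivalence $\varphi\ldimp t(\varphi)$ to derive $\varphi$. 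I expect the main obstacle to be bookkeeping rather than mathematics: one must be careful that the translation terminates (the $\ncforabt{\pi}{}$'s can be nested and each elimination replaces one modality by finitely many copies of its scope, so a suitable well-founded measure on the formula is needed), and one must verify cleanly that $\AMD$ as defined is a legitimate parameter set for Theorem~\ref{TheoAxAction}, in particular that it is a \emph{set} (not a proper class) of pointed action models closed in the way that theorem requires, and that the suppression of the always-true preconditions $\pre(\e_i)=\top$ genuinely turns the general action-model axioms into the stated $\ncfaldd{}$ schemas.
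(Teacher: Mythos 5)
Your proposal is correct and follows essentially the same route as the paper: soundness via validity of the axioms (appealing to Theorem~\ref{TheoAxAction} and the isomorphism with $\model\otimes\U_{\D}$), and completeness by using the first axiom schema to translate any valid $\varphi\in\LanForAct$ into a provably equivalent formula of ${\mathcal L}_{\nc\AMD}$ and then invoking Theorem~\ref{TheoAxAction}. The paper merely leaves implicit the bookkeeping you spell out (the explicit translation, its termination, and the $\Lambda$-compliance and finiteness checks for the action models in $\AMD$), so your additional detail is a faithful elaboration rather than a different argument.
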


\proof Soundness is immediate since all axioms are true and all rules preserve validity, where we appeal to Theorem \ref{TheoAxAction} for those axioms involving action models.

For completeness, let $\varphi\in \LanForAct$ be a valid formula. Then, by the first axiom of ${\rm Ax}_{\nc\ModForAbt\AMD}$, the formula $\varphi$ can be replaced by a provably equivalent formula $\widehat\varphi\in {\mathcal L}_{\nc\AMD}$. By Theorem \ref{TheoAxAction}, $\widehat \varphi$ is derivable, hence so is $\varphi$.
\endproof

Since the action models in $\AMD$ were auxiliary, it might be useful to restate this result in terms of our original language:

\begin{corollary}
Let $\Lambda$ be any of the logics $K$, $T$, $K4$, $K5$, $S4$, $S5$. A formula $\varphi\in \LanFor$ is valid over the class of $\Lambda$-complying models if and only if it is derivable in $\Lambda+{\rm Ax}_{\nc\ModForAbt\AMD}$. 
\end{corollary}

\section{Alternative forgetting operators}\label{SecAlt}

There are many possibilities when modelling the action of forgetting. Our aim has been to present a semantic approach, rather than a syntactic one, but even then there are several routes one can take. In this section we mention some variations and discuss how they relate to our proposal.

\subsection{Conditionally forgetting}

The defined actions of {\em forgetting $\pi$} (Definition \ref{def:for}) and {\em forgetting whether $\pi$} (Definition \ref{def:forabt}) do not have any precondition, and thus they can take place regardless of whether the agent knows $\pi$ (for forgetting $\pi$) or knows either $\pi$ or else $\lnot \pi$ (for forgetting whether $\pi$). This choice has been made because, technically, there is no reason to restrict the respective operations: they can be applied to any model, regardless of the agent's epistemic attitude towards $\pi$.\footnote{Compare this with the precondition for public announcements. In order to be announced, a formula needs to be true, not only because of the interpretation of the operation (public \emph{and truthful} announcements), but also for technical reasons: if the formula is false, then the evaluation point will be removed, and thus it is not possible to evaluate formulas in it after the operation.} As a result, even though anomalies may occur in `abnormal' situations, the defined actions work as expected in the intended cases (the agent knows / knows whether the --non tautological and non contradictory-- formula she is forgetting is the case).

However, it is also interesting to assume that the agent would not act unless she is in an intended situation. Let us focus on the action of forgetting $\pi$. An interesting possibility is working as in Definition  \ref{def:for} when the agent knows $\pi$ but doing nothing otherwise.

  \begin{definition}\label{DefCondFor}
For a propositional formula $\pi$, define a modal operator $\ncfor{'\pi}$ and extend the semantics in Definition \ref{DefSatBasic} by setting
  \begin{equation*}
    \model, w \modelsu \ncfor{'\pi}\varphi
    \quad\text{iff}\quad
    \begin{cases}
      \forall D\in \D(\pi), \model^{\set{D}}_\su, (w,0)
      \modelsu\varphi & \text{if } \model, w \modelsu \nc\pi\\
      \model, w \models \varphi & \text{otherwise.}
    \end{cases}    
  \end{equation*}
Thus, $\varphi$ should be always evaluated, and the precondition only determines where: in $\model^{\set{D}}_\su$ for every $D\in \D(\pi)$ when the precondition holds, and only in $\model$ otherwise.\footnote{Note how this is different from other alternatives, as the one used in public announcement logic when evaluating $[\anuncio{\pi}]\varphi$: if the announced formula $\pi$ is not true, then $\varphi$ does not need to be evaluated, and $[\anuncio{\pi}]\varphi$ holds by vacuity.}
  \end{definition}

This new operator has several properties that are interesting when compared to other approaches as belief contraction. For example, a vacuity principle is immediate:

\begin{proposition}
If $\pi$ is propositional and $\varphi$ is an arbitrary formula then
\begin{enumerate}

\item $\lnot\nc\pi \limp (\varphi \ldimp \ncfor{'\pi}\varphi)$ is valid, but

\item $\lnot\nc\pi \limp (\varphi \ldimp \ncfor{\pi}\varphi)$ is not necessarily valid.
 
 \end{enumerate}
  \end{proposition}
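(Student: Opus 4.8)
The plan is to prove the two claims essentially independently, since the conditional operator $\ncfor{'\pi}$ is defined by a case split on whether $\model,w\modelsu\nc\pi$, and only the ``otherwise'' branch is relevant to item 1.

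For item 1, I would unfold the definition of $\ncfor{'\pi}$ directly. Suppose $\model,w\modelsu\lnot\nc\pi$. By Definition \ref{DefCondFor} the precondition $\nc\pi$ fails at $(\model,w)$, so we are in the second case and $\model,w\modelsu\ncfor{'\pi}\varphi$ holds if and only if $\model,w\modelsu\varphi$. That is exactly the biconditional $\varphi\ldimp\ncfor{'\pi}\varphi$, so $\lnot\nc\pi\limp(\varphi\ldimp\ncfor{'\pi}\varphi)$ is valid. This half is a one-line semantic observation.

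For item 2, the task is to exhibit a pointed model, a propositional $\pi$ and a formula $\varphi$ with $\model,w\modelsu\lnot\nc\pi$ but $\model,w\not\modelsu(\varphi\ldimp\ncfor{\pi}\varphi)$. The natural candidates come from Example \ref{exa:forgetnotknown}: there $\pi$ is already not known, yet the forgetting action has a genuine effect on the agent's information. Concretely I would take the initial model of that example (where the agent knows $p\limp q$ but knows neither $p\land q$ nor $\lnot(p\land q)$), let $\pi:=p\land q$, and take $\varphi:=\lnot\ps{(\lnot p\land q\land\ps{(p\land q)})}$. Then $\model,w_0\modelsu\lnot\nc\pi$ since $w_1$ witnesses $\lnot\pi$; and $\model,w_0\modelsu\varphi$ by the computation already recorded in Example \ref{exa:forgetnotknown}, namely $\model,w_0\not\modelsu\ps{(\lnot p\land q\land\ps{(p\land q)})}$. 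On the other hand, using the clause $\set{p}\in\D(p\land q)$, the example shows $\model^{\set{\set p}}_\su,(w_0,0)\modelsu\ps{(\lnot p\land q\land\ps{(p\land q)})}$, i.e. $\model^{\set{\set p}}_\su,(w_0,0)\not\modelsu\varphi$, so there is a $D\in\D(\pi)$ with $\model^{\set D}_\su,(w_0,0)\not\modelsu\varphi$ and hence $\model,w_0\not\modelsu\ncfor{\pi}\varphi$. Thus $\varphi$ holds but $\ncfor{\pi}\varphi$ fails, refuting the biconditional while $\lnot\nc\pi$ holds.

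The main obstacle is purely bookkeeping: I must double-check that the relevant $\ps$-formula is correctly falsified at $(w_0,0)$ in the original model and correctly satisfied after the operation, and confirm that $\set p$ (rather than $\set q$) is the clause producing the effect --- all of which is already carried out in Example \ref{exa:forgetnotknown}, so it suffices to cite that computation rather than redo it. One should also note explicitly that this counterexample lives in a model that is not reflexive/$\sf S5$; if an $\sf S5$ witness were desired one could instead appeal to the analogous phenomenon, but the statement as given asks only for ``not necessarily valid'', so the example above suffices.
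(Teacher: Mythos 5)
Your item~1 is exactly the paper's argument (immediate from Definition~\ref{DefCondFor}), and your item~2 is a correct refutation, but via a different counterexample than the paper's. The paper takes a single reflexive world $w$ satisfying $p\wedge\neg q$, with $\pi=p\wedge q$ and $\varphi=\nc p$: then $\nc\pi$ fails outright, $\nc p$ holds, yet forgetting via the clause $\set{p}$ adds an accessible $\neg p$-world, so $\ncfor{\pi}{\nc p}$ fails --- a two-line check, and the witness is even an $\sf S5$ model. Your witness (the model of Example~\ref{exa:forgetnotknown} with $\varphi=\lnot\ps{(\lnot p\land q\land\ps{(p\land q)})}$) also works, and has the virtue of showing the failure even when the agent is genuinely ignorant about $\pi$, but note two small points. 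First, a citation slip: the computation recorded in Example~\ref{exa:forgetnotknown} is carried out in $\model^{\set{\set{p},\set{\lnot p,\lnot q}}}_\su$ (the \emph{forgetting whether} model, with two added copies), whereas you need the corresponding fact in the single-clause model $\model^{\set{\set{p}}}_\su$; it does hold there, because the witnessing path $(w_0,1)\to(w_0,0)$ lives entirely in the rows $i\in\set{0,1}$, which form an isomorphic copy of $\model^{\set{\set{p}}}_\su$ and the formula is diamond-positive, but you should verify this (a one-line check) rather than cite the example verbatim. Second, your closing remark is slightly off: that base model \emph{is} reflexive (indeed $\sf S4$, just not symmetric), though this is immaterial since the proposition imposes no frame conditions --- and if one did want an $\sf S5$ witness, the paper's single reflexive world already provides it.
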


\proof
The first claim is immediate from Definition \ref{DefCondFor}.

As a counterexample for the second claim we may take $\pi=p\wedge q$ and $\varphi=\nc p$, then consider a model with a single reflexive world $w$ satisfying $p\wedge\neg q$. Then, $\model,w\models \nc p$, but $\model_u^{\set{\set p}},(w,0)\models \neg\nc p$, and hence
\[\model, w\models \neg \nc \pi \wedge \neg(\varphi \leftrightarrow \ncfor \pi \varphi).\qedhere\]
\endproof

Thus the two notions of forgetting behave differently. However, in order to study them together, it is not necessarily to have both as primitives: the version with precondition can be defined in terms of its more general counterpart.
  \begin{proposition}
  If $\pi$ is propositional and $\varphi$ is arbitrary, then 
  \[ \models \ncfor{'\pi}\varphi \ldimp (\neg\nc\pi\wedge\varphi)\vee(\nc\pi\wedge \ncfor\pi\varphi).\]
  \end{proposition}

A similar variant could be defined of $\ncforabt\pi$. We will not go into details, but the treatment here would be more nuanced, as we would have to consider three cases, depending on whether $\nc\pi$, $\nc\neg\pi$, or neither one of them holds.

\subsection{Strongly forgetting whether}

As another natural alternative, a deterministic version of our operator can be
explored which would entail that the agent loses {\em all} information regarding $\pi$. According to Definition \ref{def:forabt}, in order to check
$\ncforabt{\pi}{\varphi}$, we need to check $\varphi$ in several
models, one for each element in $\D(\pi)\times\D(\lnot\pi)$. The
number of models to check can be exponential. An 
alternative deterministic forgetting operator could create just one
model by appending a new copy of each world for each clause in
$\D(\pi)\cup\D(\lnot\pi)$.

\begin{definition}
For a propositional formula $\pi$, define a modal operator $\ncforabt{^\ast\pi}$ and extend the semantics in Definition \ref{DefSatBasic} by setting
  \begin{equation}
    \model, w \modelsu \ncforabt{^{\ast}\pi}\varphi
    \quad\text{iff}\quad \model_u^{\D(\pi)\cup \D(\neg \pi)}\models \varphi.
  \end{equation}
\end{definition}

So now we need only check one model, but the price to pay is that the new model may be exponentially larger than the original. As before, the alternative operator
$\ncforabt{^{\ast}\pi}{\varphi}$ would have different properties to
$\ncforabt{\pi}{\varphi}$. For example, we have the following:

\begin{proposition}
Over the class of serial models,
\begin{enumerate}

\item $\ncforabt{^\ast(p\land q)}{\left(
  \lnot\nc p \land \lnot\nc q\right)}$ is valid, but

\item $\ncforabt{(p\land q)}{\left(
  \lnot\nc p \land \lnot\nc q\right)}$ is not valid.
\end{enumerate}

\end{proposition}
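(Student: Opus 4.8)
The plan is to verify both parts by direct computation on small models, exploiting the clausal forms $\D(p\land q)=\set{\set p,\set q}$, $\D(\lnot(p\land q))=\set{\set{\lnot p,\lnot q}}$, and the operation of Definition \ref{def:multiclause}.

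For part (1), I would fix an arbitrary serial model $\model=\langle W,R,V\rangle$ and an arbitrary world $w$, and compute $\model_u^{\D(p\land q)\cup\D(\lnot(p\land q))}=\model_u^{\set{\set p,\set q,\set{\lnot p,\lnot q}}}$. This model has four layers indexed by $\set 0\cup\set{1,2,3}$, where (say) layer $1$ falsifies $\set p$ (so $p$ is false throughout layer $1$), layer $2$ falsifies $\set q$ (so $q$ is false throughout layer $2$), and layer $3$ falsifies $\set{\lnot p,\lnot q\}}$ (so both $p,q$ are true throughout layer $3$). Seriality of $\model$ gives, for the evaluation world $w$, some $v$ with $wRv$; by clause 2 of Definition \ref{def:multiclause} we then have $(w,0)R_u^{\D}(v,i)$ for every $i\in\set{1,2,3}$. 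Since $p$ fails at $(v,1)$, we get $\model_u^\D,(w,0)\not\models\nc p$; since $q$ fails at $(v,2)$, we get $\model_u^\D,(w,0)\not\models\nc q$. Because $\model,w$ were arbitrary serial, this shows $\serial\modelsu\ncforabt{^\ast(p\land q)}{(\lnot\nc p\land\lnot\nc q)}$.

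For part (2), I would exhibit a single pointed serial (indeed $\sf S5$) counter-model. Take $\model$ with one reflexive world $w$ at which $p$ holds and $q$ fails. To refute $\ncforabt{(p\land q)}{(\lnot\nc p\land\lnot\nc q)}$ it suffices, by Definition \ref{def:forabt}, to find \emph{one} pair $(D_1,D_2)\in\D(p\land q)\times\D(\lnot(p\land q))$ for which $\lnot\nc p\land\lnot\nc q$ fails at $(w,0)$ in $\model_u^{\set{D_1,D_2}}$. Choose $D_1=\set q$ and $D_2=\set{\lnot p,\lnot q}$. In $\model_u^{\set{\set q,\set{\lnot p,\lnot q}}}$ the worlds accessible from $(w,0)$ are $(w,0),(w,1),(w,2)$; at $(w,0)$ we have $p$ (original valuation), at $(w,1)$ we have $p$ (since $p\notin\set q$ and $p$ was true at $w$), and at $(w,2)$ we have $p$ (since $\lnot p\in\set{\lnot p,\lnot q}$ forces $p$ true). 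Hence $p$ holds at every successor of $(w,0)$, so $\model_u^{\set{\set q,\set{\lnot p,\lnot q}}},(w,0)\modelsu\nc p$, i.e. $\lnot\nc p$ fails there, and therefore $\model,w\not\modelsu\ncforabt{(p\land q)}{(\lnot\nc p\land\lnot\nc q)}$.

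The computation is entirely routine; the only point requiring care is the bookkeeping of which literals are forced true or false in each added layer (clause 4 of Definition \ref{def:multiclause}) and, in part (2), the observation that falsifying the clause $\set q$ rather than $\set p$ leaves $p$ known — this is precisely the non-uniform-choice phenomenon already illustrated in Example \ref{exa:forgetnotknown}, so no genuine obstacle arises. If one wanted to state the $\sf S5$ strengthening explicitly, the same single-world reflexive model serves, since it is an $\sf S5$ model.
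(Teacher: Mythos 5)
Your proof is correct and follows essentially the same route as the paper's: part (1) uses seriality to find an accessible world in the layer falsifying $\set{p}$ and another in the layer falsifying $\set{q}$, and part (2) refutes validity with a single reflexive world and the clause pair $(\set{q},\set{\lnot p,\lnot q})$, under which $p$ remains known. The only (immaterial) difference is your counterexample world satisfies $p\land\lnot q$ where the paper's satisfies $p\land q$; both yield $\nc p$ at $(w,0)$ in $\model_u^{\set{\set{q},\set{\lnot p,\lnot q}}}$, so the argument goes through unchanged.
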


\proof
Suppose that $\mathcal M$ is a serial model, $w$ is a world of $\model$ and $v$ a world accessible from $w$. The clausal form of $p\wedge q$ is $\set{\set p,\set q}$, while the clausal form of $\neg(p\wedge q)$ is $\set{\set {\neg p,\neg q}}$. Thus $\D(p\wedge q)\cup \D(\neg(p\wedge q))=\{\set p,\set q,\set{\neg p,\neg q}\}$ and in $\model_u^{\D(p\wedge q)\cup \D(\neg(p\wedge q))}$ from $(w,0)$ is accessible a world $(v,\set p)$ satisfying $\neg p$ and another $(v,\set q)$ satisfying $\neg q$. It follows that
\[\model,w\models\ncforabt{^\ast(p\wedge q)}(\neg\nc p\wedge \neg \nc q).\]
Since $\mathcal M$ and $w$ were arbitrary, the first claim follows.

For the second, consider a model consisting of a single reflexive point $w$ satisfying $p\wedge q$. Then, $(\model_u^{\set{\set{q},\set{\neg p,\neg q}}},w)$ clearly satisfies $\nc p$, so $\model,w\not \models \ncforabt{(p\land q)}{\left(
  \lnot\nc p \land \lnot\nc q\right)}$, and hence this formula is not valid.
\endproof

\subsection{Dependent forgetting}\label{SecDepend}

In uniform forgetting (Definition \ref{def:forabt}), an agent forgets a formula $\pi$ by falsifying a fixed clause of $\pi$'s clausal form in one copy of the initial model and a fixed clause of $\lnot\pi$'s clausal form in another. However, it may be that at every point in the model, she forgets $\pi$ `for a different reason'. This gives an alternative way to model forgetting, which behaves in a different way from uniform forgetting as presented above. Let us give the definitions.

\begin{definition}
  Let $\model = \langle W, R, V \rangle$ be a model and $(\D_1,\D_2)$ be two sets of clauses. A {\em forgetting function pair} is a pair of functions $(f_1,f_2)$ such that $f_1\colon W\to \D_1$ and $f_2\colon W\to \D_2$.

  The model $\model^{(f_1,f_2)}_d = \langle W^{(f_1,f_2)}_d, R^{(f_1,f_2)}_d, V^{(f_1,f_2)}_d \rangle$ is given by
  \[
    W^{(f_1,f_2)}_d = W \times \set{0, 1, 2},
    \qquad
    (w,i)R^{(f_1,f_2)}_d (v,j) \;\;\text{iff}\;\; wRv
  \]
  and, for each $p \in \at$:
  \begin{enumerate}
    \item $(w,0) \in V^{(f_1,f_2)}_d(p)$ iff $w \in V(p)$; and
    \item for $i = 1, 2$, $(w,i) \in V^{(f_1,f_2)}_d(p)$ iff either $\set{p, \lnot p} \cap f_i(w) = \varnothing$ and $w \in V(p)$, or else $\neg p \in f_i(w)$.
  \end{enumerate}
\end{definition}

Thus, the model $\model^{(f_1,f_2)}_d$ contains three copies of $\model$. The elements of the first, the $(w,0)$ worlds, have the original atomic valuation; each element of the second, the $(w, 1)$ worlds, falsifies a particular clause in $\D_1$, as indicated by the forgetting function $f_1$; finally, each element of the third copy, the $(w, 2)$ worlds, falsifies a particular clause in $\D_2$, as indicated by the forgetting function $f_2$.

\begin{definition}
Let $\model$ be a model. Define the {\em dependent satisfaction relation} $\modelsd$ on $\model$ by extending Definition \ref{DefSatBasic} with $\model, w \modelsd \ncforabt\pi\varphi$ if and only if 
\[\forall f_1\colon W\to \D(\pi),\forall f_2\colon W\to\D(\neg \pi), \model^{(f_1,f_2)}_d, (w,0) \modelsd\varphi.
  \]
The set of formulas in $\LanFor$ valid under $\modelsd$ will be denoted $\LogForD$.
\end{definition}

It turns out that our dependent and uniform interpretations give rise to different logics. 

\begin{proposition} \label{prop:diffLogics}
The logics $\LogFor$ (for uniform forgetting) and $\LogForD$ (for dependent forgetting) are different. In particular, if
\[\varphi=\nc (p\wedge q)\rightarrow\ncforabt{(p\wedge q)}(\nc p\vee\nc q),\]
then $\varphi\in\LogFor\setminus\LogForD$; this is true even if we restrict to the class of $\sf S5$ models.
\end{proposition}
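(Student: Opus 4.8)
The plan is to prove that $\varphi = \nc(p\wedge q)\rightarrow \ncforabt{(p\wedge q)}(\nc p\vee \nc q)$ is valid under the uniform semantics $\modelsu$ but fails under the dependent semantics $\modelsd$, and moreover that both the validity and the failure persist when we restrict to $\sf S5$ models. The key observation driving the uniform side is that $\D(p\wedge q)=\set{\set p,\set q}$, so under uniform forgetting the whole copy indexed by the chosen clause $D_1\in\D(p\wedge q)$ falsifies \emph{the same} literal everywhere: if $D_1=\set p$, then in the $(w,1)$-copy the atom $p$ is uniformly false, so this copy still satisfies $\nc q$ whenever the original knowledge entailed $q$ — and $\nc(p\wedge q)$ precisely guarantees that every accessible world satisfied both $p$ and $q$. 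Symmetrically, if $D_1=\set q$, the $(w,1)$-copy uniformly satisfies $\nc p$. The remaining clause $D_2\in\D(\lnot(p\wedge q))=\set{\set{\lnot p,\lnot q}}$ only adds a copy where both $p$ and $q$ become true, which is harmless for preserving $\nc p$ or $\nc q$. So in every $\model^{\set{D_1,D_2}}_\su$ reachable from $(w,0)$, either $\nc p$ or $\nc q$ holds at $(w,0)$.

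First I would make this precise: assume $\model,w\modelsu \nc(p\wedge q)$, so every $v$ with $wRv$ satisfies $p\wedge q$. Fix arbitrary $D_1\in\D(p\wedge q)$ and $D_2\in\D(\lnot(p\wedge q))$. By Definition \ref{def:multiclause}, the worlds accessible from $(w,0)$ in $\model^{\set{D_1,D_2}}_\su$ are exactly the $(v,i)$ with $wRv$ and $i\in\set{0,1,2}$. For $i=0$, $(v,0)$ inherits the valuation so satisfies $p\wedge q$. For $i=2$ (the $\set{\lnot p,\lnot q}$-copy), $p$ and $q$ are both made true, so $(v,2)$ satisfies $p\wedge q$ as well. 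For $i=1$: if $D_1=\set p$ then clause $\set p$ is falsified, meaning $p$ is false but $q$ is untouched, hence $(v,1)$ still satisfies $q$; if $D_1=\set q$ then symmetrically $(v,1)$ still satisfies $p$. Thus either every accessible world satisfies $q$ (when $D_1=\set q$... wait, I should be careful: when $D_1=\set p$, the $(v,1)$-copies satisfy $q$, so all accessible worlds satisfy $q$, giving $(w,0)\modelsu\nc q$; when $D_1=\set q$, all accessible worlds satisfy $p$, giving $(w,0)\modelsu\nc p$). In either case $(w,0)\modelsu \nc p\vee\nc q$. Since $D_1,D_2$ were arbitrary, $\model,w\modelsu\ncforabt{(p\wedge q)}(\nc p\vee\nc q)$, establishing $\varphi\in\LogFor$. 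Note this argument uses no frame conditions, so it holds over $\sf S5$ a fortiori.

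Next I would exhibit a countermodel for $\modelsd$. Take an $\sf S5$ model $\model$ with two worlds $w$ and $v$, $R$ the total relation (so it is an equivalence), with $w$ satisfying $p\wedge q$ and $v$ also satisfying $p\wedge q$ — actually I want $\nc(p\wedge q)$ to hold at $w$, so I just need every accessible world to satisfy $p\wedge q$; a model where both $w$ and $v$ satisfy $p\wedge q$ works. Now define a forgetting function $f_1\colon W\to\D(p\wedge q)$ by $f_1(w)=\set p$ and $f_1(v)=\set q$, and let $f_2$ be the unique function into $\D(\lnot(p\wedge q))=\set{\set{\lnot p,\lnot q}}$. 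In $\model^{(f_1,f_2)}_d$, from $(w,0)$ we reach $(w,1)$, which falsifies $\set p$ hence satisfies $\lnot p$, and we reach $(v,1)$, which falsifies $\set q$ hence satisfies $\lnot q$ (and keeps $p$). So at $(w,0)$ we have a $\lnot p$-accessible world and a $\lnot q$-accessible world, whence $(w,0)\not\modelsd\nc p$ and $(w,0)\not\modelsd\nc q$, so $(w,0)\not\modelsd\nc p\vee\nc q$. Since this witnesses one pair $(f_1,f_2)$ failing $\varphi$'s consequent while the antecedent $\nc(p\wedge q)$ holds at $w$, we get $\model,w\not\modelsd\varphi$, so $\varphi\notin\LogForD$. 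The model is $\sf S5$, completing the separation.

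The only mildly delicate point — and the one I would double-check carefully — is that the dependent countermodel must still satisfy $\nc(p\wedge q)$ at the evaluation world (otherwise $\varphi$ holds vacuously), and simultaneously admit a forgetting function $f_1$ that takes different values at different accessible worlds; this is exactly why a \emph{single-world reflexive point} is not enough and one needs at least two worlds. I would also verify that $R^{(f_1,f_2)}_d$ remains an equivalence relation when $R$ is (which is clear from the definition, since $(w,i)R^{(f_1,f_2)}_d(v,j)$ iff $wRv$, independent of the indices), so the countermodel genuinely lies in the $\sf S5$ class. Everything else is a direct unwinding of the two satisfaction definitions.
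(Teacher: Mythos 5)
Your proposal is correct and follows essentially the same route as the paper: the uniform side is verified by a case split on the chosen clause $D_1\in\set{\set p,\set q}$ (showing $\nc q$ or $\nc p$ survives in $\model^{\set{D_1,\set{\lnot p,\lnot q}}}_\su$), and the dependent side uses the very same two-world $\sf S5$ countermodel with $f_1(w)=\set{p}$, $f_1(v)=\set{q}$ and $f_2$ constantly $\set{\lnot p,\lnot q}$. Your closing remark on why a single reflexive world cannot work is a nice addition but does not change the argument.
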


\proof
We argue semantically that $\varphi\in\LogFor$. Let $\model=\langle W,R,V\rangle$ be a model and $w\in W$ satisfy $\nc(p\wedge q)$.  We have that $\D(p\wedge q)=\set{\set{p},\set{q}}$ and $\D(\neg(p\wedge q))=\set{\set{\neg p,\neg q}}$, so to check that $\ncforabt{(p\wedge q)}(\nc p\vee\nc q)$ holds in $w$ it is enough to see that $(\model^{\set{\set{p},\set{\neg p,\neg q}}}_\su,(w,0))$ and $(\model^{\set{\set{q},\set{\neg p,\neg q}}}_\su,(w,0))$ both satisfy $\nc p\vee \nc q$.

Let $\model^{\set{\set{p},\set{\neg p,\neg q}}}_\su=\langle
W',R',V'\rangle$ and suppose that $(w,0) R' (v,i)$; we claim that
independently of $i$, $(v,i)\in V'(q)$. If $i=0$, this follows from
the assumption that $w\in\ts{p\wedge q}^{\model}$. If $i=1$, then $q$ does not
occur in $\set{p}$ and then $(v,1)\in V'(q)$. Finally, if $i=2$, then
$\neg q\in \set{\neg p,\neg q}$ so $(v,2)\in V'(q)$. In all three
cases, $(v,i)\in V'(q)$ and since $(v,i)$ was arbitrary, $(w,0)\in
\ts{\nc q}^{\model^{\set{\set{p},\set{\neg p,\neg q}}}_\su}$. 

A symmetric argument shows that $\model^{\set{\set{q},\set{\neg p,\neg q}}}_\su,(w,0)\modelsd \nc p$, so that $\model,w\models \ncforabt{(p\wedge q)}(\nc p\vee \nc q),$ as claimed.

{\medskip}

It remains to check that $\varphi\not\in\LogForD$. For this, consider the model $\model=\langle W,R,V\rangle$ shown below, where $W=\set{w,v}$, $R$ is the full relation on $W$ and $V(p)=V(q)= W$; observe that $\model$ is an $\sf S5$ model.
\begin{center}
  \begin{tikzpicture}[->,>=stealth', very thick]
    \node [mundo, double] (w0) at (-1, 0) {\scriptsize $pq$};
    \node [mundo] (w1) at ( 1, 0) {\scriptsize $pq$};

    \node [texto, node distance=6pt, right=of w0.south] {\scriptsize $w$};
    \node [texto, node distance=6pt, right=of w1.south] {\scriptsize $v$};

    \path (w0) edge [loop left] (w0)
               edge [<->] (w1)
          (w1) edge [loop right] (w1);
  \end{tikzpicture}
\end{center}
Clearly, $\model,w\modelsd \nc (p\wedge q)$. To show that $\model,w\not\modelsd \ncforabt{(p\wedge q)}(\nc p\vee \nc q)$, we only need to exhibit two forgetting functions $f_1\colon W\to \set{\set{p},\set{q}}$ and $f_2\colon W\to \set{\set{\neg p,\neg q}}$ such that $\model^{(f_1,f_2)},w\not\modelsd \nc p\vee \nc q$. Let $f_1(w)=\set{p}$ and $f_1(v)=\set{q}$, while $f_2(w) = f_2(v) =\set{\neg p,\neg q}$. The resulting model $\model^{(f_1,f_2)}=\langle W'',R'',V''\rangle$ is the following.
\begin{center}
  \begin{tikzpicture}[->,>=stealth', very thick]
    \node [mundo] (w10) at (-1, 1.5) {\scriptsize $q$};
    \node [mundo] (w11) at ( 1, 1.5) {\scriptsize $p$};
    \node [mundo, double] (w00) at (-1, 0) {\scriptsize $pq$};
    \node [mundo] (w01) at ( 1, 0) {\scriptsize $pq$};
    \node [mundo] (w20) at (-1,-1.5) {\scriptsize $pq$};
    \node [mundo] (w21) at ( 1,-1.5) {\scriptsize $pq$};

    \node [texto, node distance=6pt, right=of w10.north] {\scriptsize $(w,1)$};
    \node [texto, node distance=6pt, right=of w11.north] {\scriptsize $(v,1)$};
    \node [texto, node distance=6pt, left=of w00.south] {\scriptsize $(w,0)$};
    \node [texto, node distance=6pt, right=of w01.south] {\scriptsize $(v,0)$};
    \node [texto, node distance=6pt, right=of w20.south] {\scriptsize $(w,2)$};
    \node [texto, node distance=6pt, right=of w21.south] {\scriptsize $(v,2)$};

    \path (w10) edge [loop above] () 
                edge [<->] (w11)
                edge [<->] (w00)
                edge [<->, bend right = 80] (w20)
                edge [<->] (w01)
                edge [<->] (w21)
          (w00) edge [loop left] ()
                edge [<->] (w01)
                edge [<->] (w11)
                edge [<->] (w21)
          (w20) edge [loop below] ()
                edge [<->] (w21)
                edge [<->] (w00)
                edge [<->] (w01)
                edge [<->] (w11)
          (w11) edge [loop above] ()
                edge [<->] (w01)
                edge [<->, bend left = 80] (w21)
          (w01) edge [loop right] ()
          (w01) edge [<->] (w21)
          (w21) edge [loop below] ();
  \end{tikzpicture}
\end{center}
Observe how $R''$ is also the full relation. Moreover, $(w,1)\not\in
V''(p)$ and $(v,1)\not\in V''(q)$, so $(w,0)\not\in \ts{\nc p\vee\nc
  q}^{\model^{(f_1,f_2)}}$ and hence $\model,w\not\modelsd\ncforabt
{(p \wedge q)}(\nc p\vee\nc q)$, as desired. 
\endproof

Thus, the notion of dependent forgetting leads to a different logic of uniform forgetting. Our example above suggests that, with depending forgetting, more information is lost in the act of forgetting, and this may be desirable in applications. However, the technique of representing forgetting in terms of action models does not work in this setting (at least not in a straightforward way); a further exploration of this notion of forgetting is left for future work.

\section{Conclusions}

The present paper uses the possible world semantics to model the changes that occur in an agent's information when she {\em forgets the truth-value} of a propositional formula as represented by its `minimal' conjunctive normal form. Besides introducing a \emph{uniform forgetting whether} model operation representing such action and its correspondent modality for expressing its effects, the paper has discussed several properties of the operation as well as provided a sound and complete axiom system for it. Two variations of this \emph{uniform forgetting whether} action have been explored: a simpler \emph{uniform forgetting} action which simply forgets that a given formula is the case, and a more complex \emph{dependent forgetting whether} under which the agent might forget the given formula's truth-value for different reasons in different parts of the model. It has been proved not only that \emph{uniform forgetting whether} cannot be defined in terms of the simpler \emph{uniform forgetting}, but also that the \emph{uniform forgetting whether} and the \emph{dependent forgetting whether} give raise to different logics.

{\medskip}

Several directions are left for further study. First, the
axiomatization of the \emph{dependent forgetting} logic is still an
open issue. It is possible that the action models axioms can be used
for \emph{dependent forgetting} if some restrictions are introduced to
the forgetting functions. Second, and possibly more interesting, is an
action of forgetting modal formulas, which would allow the agents to
forget their own or other agents' epistemic states. Finally, our model of forgetting differs in several ways from the belief contraction approach. We leave a more comprehensive comparison of the two, along with a possible unification, for future work.

\end{document}